\documentclass[journal,comsoc]{IEEEtran}
\usepackage[T1]{fontenc}
\usepackage{graphicx}
\usepackage{amssymb}
\usepackage{amsmath}
\usepackage{cite}
\usepackage{subfigure}
\usepackage{mathrsfs}
\usepackage[displaymath,mathlines]{lineno}
\usepackage{color}
\usepackage{tabulary}
\usepackage{multirow}
\usepackage{algpseudocode}
\usepackage{algorithm,algpseudocode}
\usepackage{algorithmicx}
\usepackage{pbox}
\usepackage{multicol}
\usepackage{lipsum}
\usepackage{cleveref}
\usepackage{psfrag}
\usepackage{url}
\usepackage{bbm}
\usepackage{epstopdf}
\usepackage{pgf}
\usepackage{amsthm}

\newtheorem{theorem}{Theorem}
\newtheorem{lemma}{Lemma}
\newtheorem{corollary}{Corollary}
\newtheorem{proposition}{Proposition}

\newtheorem{remark}{Remark}




%

\def\BibTeX{{\rm B\kern-.05em{\sc i\kern-.025em b}\kern-.08em
    T\kern-.1667em\lower.7ex\hbox{E}\kern-.125emX}}

\setcounter{page}{1}

\begin{document}
\vspace{0 cm}
\title{
   Joint Power Allocation and User Association Optimization for Massive MIMO Systems}
\author{
         Trinh Van Chien \textit{Student Member}, \textit{IEEE}, Emil Bj\"{o}rnson, \textit{Member}, \textit{IEEE}, and Erik G. Larsson, \textit{Fellow}, \textit{IEEE}
\thanks{
        The authors are with the Department of Electrical
    Engineering (ISY), Link\"{o}ping University, 581 83 Link\"{o}ping,
    Sweden
        (email: trinh.van.chien@liu.se; emil.bjornson@liu.se; erik.g.larsson@liu.se).
}
\thanks{
       This paper was supported by the European Union's Horizon 2020 research and innovation programme under grant agreement No 641985 (5Gwireless). It was also supported by ELLIIT and CENIIT.
}
\thanks{
       Parts of this paper were presented at IEEE ICC 2016 \cite{Chien2016a}.
}
}

\markboth{}
        {}

\maketitle

\vspace{-1.2cm}
\begin{abstract}
This paper investigates the joint power allocation and user association problem in multi-cell Massive MIMO (multiple-input multiple-output) downlink (DL) systems. The target is to minimize the total transmit power consumption when each user is served by an optimized subset of the base stations (BSs), using non-coherent joint transmission. We first derive a lower bound on the ergodic spectral efficiency (SE), which is applicable for any channel distribution and precoding scheme. Closed-form expressions are obtained for Rayleigh fading channels with either maximum ratio transmission (MRT) or zero forcing (ZF) precoding. From these bounds, we further formulate the DL power minimization problems with fixed SE constraints for the users. These problems are proved to be solvable as linear programs, giving the optimal power allocation and BS-user association with low complexity. Furthermore, we formulate a max-min fairness problem which maximizes the worst SE among the users, and we show that it can be solved as a quasi-linear program. Simulations manifest that the proposed methods provide good SE for the users using less transmit power than in small-scale systems and the optimal user association can effectively balance the load between BSs when needed. Even though our framework allows the joint transmission from multiple BSs, there is an overwhelming probability that only one BS is associated with each user at the optimal solution.
\end{abstract}

\begin{IEEEkeywords}
Massive MIMO, user association, power allocation, load balancing, linear program.
\end{IEEEkeywords}

\section{Introduction} \label{Sec:Introduction}

The exponential growth in wireless data traffic and number of wireless devices cannot be sustained by the current cellular network technology. The fifth generation (5G) cellular networks are expected to bring thousand-fold system capacity improvements over contemporary networks, while also supporting new applications with massive number of low-power devices, uniform coverage, high reliability, and low latency \cite{Hossain2014, Wang2014}. These are partially conflicting goals that might need a combination of several new radio concepts; for example, Massive MIMO \cite{Marzetta2010a}, millimeter wave communications \cite{Rappaport2013}, and device-to-device communication \cite{Tehrani2014}.

Among them, Massive MIMO, a breakthrough technology proposed in \cite{Marzetta2010a}, has gained lots of attention recently \cite{Ngo2013a, Larsson2014a, Gao2015a, Bjornson2016b}. It is considered as an heir of the MIMO technology since its scalability can provide very large multiplexing gains, while previous single-user and multi-user MIMO solutions have been severely limited by the channel estimation overhead and unfavorable channel properties. In Massive MIMO, each BS is equipped with hundreds of antennas and serves simultaneously tens of users. Since there are many more antennas than users, simple linear processing techniques such as MRT or ZF, are close to optimal. The estimation overhead is made proportional to the number of users by sending pilot signals in the uplink (UL) and utilizing the channel estimates also in the DL by virtue of time-division duplex (TDD).

Because $80 \%$ of the power in current networks is consumed at the BSs \cite{Auer2011a}, the BS technology needs to be redesigned to reduce the power consumption as the wireless traffic grows. Many researchers have investigated how the physical layer transmissions can be optimized to reduce the transmit power, while maintaining the quality-of-service (QoS); see \cite{Auer2011a, Rashid1998b, Stridh2006a, Sun2015, Bjornson2013d, Li2015} and references therein. In particular, the precoding vectors and power allocation were jointly optimized in \cite{Rashid1998b} under perfect channel state information (CSI). The algorithm was extended in \cite{Stridh2006a} to also handle the BS-user association, which is of paramount importance in heterogeneously deployed networks and when the users are heterogeneously distributed. However, \cite{Stridh2006a} did not include any power constraints at the BSs, which could lead to impractical solutions. In contrast, \cite{Sun2015} showed that most joint power allocation and BS-user association problems with power constraints are NP-hard. The recent papers \cite{Bjornson2013d, Li2015} consider a relaxed problem formulation where each user can be associated with multiple BSs and show that these problems can be solved by convex optimization.

The papers \cite{Rashid1998b, Stridh2006a, Sun2015, Bjornson2013d, Li2015} are all optimizing power with respect to the small-scale fading, which is very computationally demanding since the fading coefficients change rapidly (i.e., every few milliseconds). It is also unnecessary to compensate for bad fading realizations by spending a lot of power on having a constant QoS, since it is typically the average QoS that matters to the users. In contrast, the small-scale fading has negligible impact on Massive MIMO systems, thanks to favorable propagation \cite{Ngo2014a}, and closed-form expressions for the ergodic SE are available for linear precoding schemes \cite{Ngo2013a}. The power allocation can be optimized with respect to the slowly varying large-scale fading instead \cite{Larsson2014a}, which makes advanced power control algorithms computationally feasible. A few recent works have considered power allocation for Massive MIMO systems. For example, the authors in \cite{Zhao2013} formulated the DL energy efficiency optimization problem for the single cell Massive MIMO systems that takes both the transmit and circuit powers into account. The paper \cite{Guo2014a} considered optimized user-specific pilot and data powers for given QoS constraints, while \cite{Victor2015b} optimized the max-min SE and sum SE. None of these papers have considered the BS-user association problem. 

Massive MIMO has demonstrated high energy efficiency in homogeneously loaded scenarios \cite{Ngo2013a}, where an equal number of users are preassigned to each BS.  At any given time, the user load is typically heterogeneously distributed, such that some BSs have many more users in their vicinity than others. Large SE gains are often possible by balancing the load over the network \cite{Ye2013a, Andrews2014a}, using some other user association rule than the simple maximum signal-to-noise ratio (max-SNR) association. Instead of associating a user with only one BS, coordinated multipoint (CoMP) methods can be used to let multiple BSs jointly serve a user \cite{Boldi2011a}. This can either be implemented by sending the same signal from the BSs in a coherent way, or by sending different simultaneous signals in a non-coherent way. However, finding the optimal association is a combinatorial problem with a complexity that scales exponentially with the network size \cite{Andrews2014a}.  Such association rules are referred to as a part of CoMP joint transmission and have attracted significant interest because of their potential to increase the achievable rate \cite{ Boldi2011a, Ye2013a}. While load balancing is a well-studied problem for heterogeneous multi-tier networks, the recent works \cite{Bethanabhotla2016a, Liu2015a, Bjornson2013e} have shown that large gains are possible also in Massive MIMO systems. From the game theory point of view, the author in \cite{Bethanabhotla2016a} proposed a user association approach to maximize the SE utility while taking pilot contamination into account. Apart from this, \cite{Liu2015a} considered the sum SE maximization of a network where one user is associated with one BS. We note that \cite{Bethanabhotla2016a, Liu2015a} only investigated user association problems for a given transmit power at the BSs. Different from \cite{Bethanabhotla2016a, Liu2015a}, the total power consumption minimization problems with optimal and sub-optimal precoding schemes were investigated in \cite{Bjornson2013e}.

In this paper we jointly optimize the power allocation and BS-user association for multi-cell Massive MIMO DL systems. Specifically, our main contributions are as follows:
\begin{itemize}
\item We derive a new ergodic SE expression for the scenario when the users can be served by multiple BSs, using non-coherent joint transmission and decoding the received signals in a successive manner. Closed-form expressions are derived for MRT and ZF precoding. 
\item We formulate a transmit power minimization problem under ergodic SE requirements at the users and limited power budget at the BSs. This problem is shown to be a linear program when the new ergodic SE expression for MRT or ZF is used, so the optimal solution is found in polynomial time.
\item The optimal BS-user association rule is obtained from the transmit power minimization problem. This rule reveals how the optimal association depends on the large-scale fading, estimation quality, signal-to-interference-and-noise ratio (SINR), and pilot contamination. Interestingly, only a subset of BSs serves each user at the optimal solution.
\item We consider the alternative option of optimizing the SE targets utilizing max-min SE formulation with user-specific weights. This problem is shown to be quasi-linear and can be solved by an algorithm that combines the transmit power minimization with the bisection method.
\item The effectiveness of our novel algorithms and analytical results are demonstrated by extensive simulations. These show that the power allocation, array gain, and BS-user association are all effective means to decrease the power consumption in the cellular networks. Moreover, we show that the max-min algorithm can provide uniformly great SE for all users, irrespective of user locations, and provide a map that shows how the probability of being served by a certain BS depends on the user location.
\end{itemize}

This paper is organized as follows: Section \ref{Section:System-Model and Achievable Performance} presents the multi-cell Massive MIMO system model and derives lower bounds on the ergodic SE. In Section \ref{Section:Power-Optimization} the transmit power minimization problem is formulated. The optimal solution is obtained in Section \ref{Section:Optimal-Solution} where also the optimal BS-user association rule is obtained, while an algorithm for max-min SE optimization is derived in Section \ref{Section:Max-Min-QoS}. Finally, Section \ref{Section:Numerical-Results} gives numerical results and Section \ref{Section:Conclusion} summarizes the main conclusions.

\textit{Notations:}  We use upper-case bold face letters for matrices and lower-case bold face ones for vectors. $\mathbf{I}_M$ and $\mathbf{I}_K$ are the identity matrices of size $M \times M$ and $K \times K$, respectively. The operator $\mathbb{E} \{ \cdot \}$ is the expectation of a random variable. The notation $ \| \cdot \| $ stands for the Euclidean norm and $\mathrm{tr} ( \cdot)$ is the trace of a matrix. The regular and Hermitian transposes are denoted by $(\cdot)^T$ and $(\cdot)^H$, respectively. Finally, $\mathcal{CN}(.,.)$ is the circularly symmetric complex Gaussian distribution.

\section{System Model and Achievable Performance} \label{Section:System-Model and Achievable Performance}
\begin{figure}[t]
    \centering
      \includegraphics[width=0.35\textwidth]{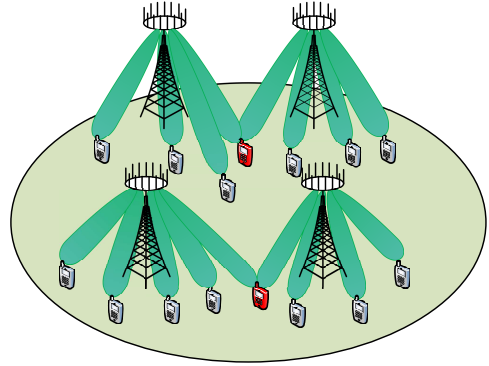}
    \caption{A multiple-cell Massive MIMO DL system where users can be associated with more than one BS (e.g., red users). The optimized BS subset for each user is obtained from the proposed optimization problem.}
    \label{fig:MassiveMIMOSystems}
\end{figure}

A schematic diagram of our system model is shown in Fig.~\ref{fig:MassiveMIMOSystems}. We consider a Massive MIMO system with $L$ cells. Each cell comprises a BS with $M$ antennas. The system serves $K$ single antenna users in the same time-frequency resource. Note that each user is conventionally associated and served by only one of the BSs. However, in this paper, we optimize the BS-user association and investigate when it is preferable to associate a user with multiple BSs. Therefore, the users are numbered from $1$ to $K$ without having predefined cell indices.

We assume that the channels are constant and frequency-flat in a coherence interval of length $\tau_c$ symbols and the system operates in TDD mode. In detail, $\tau_p$ symbols are used for channel estimation, so the remaining portion of a coherence block including $ \tau_c - \tau_p$ symbols are dedicated for the data transmission. In the UL, the received baseband signal $\mathbf{y}_l \in \mathbb{C}^{M}$ at BS $l$, for $l=1, \ldots, L,$ is modeled as
\begin{equation} \label{eq: UL-Recieved-Signal-Vector}
\mathbf{y}_{l} = \sum_{t=1}^{K} \mathbf{h}_{l,t} \sqrt{p_t} x_t + \mathbf{n}_l,
\end{equation}
where $p_t$ is the transmit power of user $t$ assigned to the normalized transmit symbol  $x_t$ with $\mathbb{E} \{| x_t |^2\} =1$. At each BS,  the receiver hardware is contaminated by additive noise $\mathbf{n}_l \sim \mathcal{CN}( \mathbf{0}, \sigma_{ \mathrm{UL} }^2 \mathbf{I}_M )$. The vector $\mathbf{h}_{l, t} $ denotes the channel between user $t$ and BS $l$. In this paper, we consider uncorrelated Rayleigh fading channels, meaning that the channel realizations are independent between users, BS antennas and between coherence intervals. Mathematically, each channel vector $\mathbf{h}_{l, t} $, for $t= 1, \ldots, K$, is a realization of the circularly symmetric complex Gaussian distribution
\begin{equation} \label{eq: Rayleigh-Fading}
\mathbf{h}_{l,t} \sim \mathcal{CN} ( \mathbf{0}, \beta_{l,t} \mathbf{I}_M ).
\end{equation}
The variance $\beta_{l,t}$ describes the large-scale fading which, for example, symbolizes the attenuation of signals due to diffraction around large objects such as high buildings and due to propagation over a long distance between the BS and user. Let us define the channel matrix $\mathbf{H}_l =[ \mathbf{h}_{l,1}, \ldots ,\mathbf{h}_{l,K}]  \in \mathbb{C}^{M \times K}$,  the diagonal power matrix $\mathbf{P} = \mathrm{diag} (p_1,\ldots,p_K) \in \mathbb{C}^{ K \times K}$, and the useful signal vector $\mathbf{x}_{l} = [ x_{l,1},\ldots, x_{l,M}]^T  \in \mathbb{C}^{M}$. Thus, the UL received signal at BS $l$ in \eqref{eq: UL-Recieved-Signal-Vector} can be written as
\begin{equation} \label{eq: UL-Recieved-Signal-Matrix}
\mathbf{y}_l = \mathbf{H}_l \mathbf{P}^{1/2} \mathbf{x}_l + \mathbf{n}_l.
 \end{equation}
 
Each BS in a Massive MIMO system needs CSI in order to make efficient use of its antennas; for example, to coherently combine desired signals and reject interfering ones. BSs do not have CSI a priori, which calls for CSI estimation from UL pilot signals in every coherence interval.

\subsection{Uplink Channel Estimation}
The pilot signals are a part of the UL transmission. We assume that user $k$ transmit the pilot sequence $\pmb{\phi}_k$ of length $\tau_p$ symbols described by the UL model in \eqref{eq: UL-Recieved-Signal-Matrix}. 
We let $\mathcal{P}_k  \subset \{ 1, \ldots, K \}$ denote the set of user indices, including user $k$, that use the same pilot sequence as user $k$. Thus, the pilot sequences are assumed to be mutually orthogonal such that
\begin{equation}
\pmb{\phi}_t^H \pmb{\phi}_k = 
\begin{cases}
0, \; t \notin \mathcal{P}_k,\\
\tau_p, \; t \in \mathcal{P}_k .
\end{cases}
\end{equation}
The received pilot signal $\mathbf{Y}_l \in \mathbb{C}^{M \times \tau_p}$ at BS $l$ can be expressed as
\begin{equation} \label{eq: UL-Recieved-Pilot-Matrix}
\mathbf{Y}_l = \mathbf{H}_l \mathbf{P}^{1/2} \pmb{ \Phi }^H + \mathbf{N}_l,
\end{equation}
where the $\tau_p \times K$ pilot matrix $\pmb{\Phi} =  [\pmb{\phi}_1, \ldots , \pmb{\phi}_K]$ and $\mathbf{N}_l \in \mathbb{C}^{M \times \tau_p}$ is Gaussian noise with independent entries having the distribution $\mathcal{CN} (0, \sigma_{\mathrm{UL}}^2 )$. Based on the received pilot signal \eqref{eq: UL-Recieved-Pilot-Matrix} and assuming that the BS knows the channel statistics, it can apply minimum mean square error (MMSE) estimation \cite{Kay1993a} to obtain a channel estimate of $\mathbf{h}_{l,k}$ as shown in the following lemma.
 \begin{lemma} \label{Lemma:ChannelEstimate}
BS $l$  can estimate the channel to user $k$ using MMSE estimation from the following equation,
\begin{equation}
\begin{split}
\mathbf{Y}_l \pmb{\phi}_k  = \mathbf{H}_l \mathbf{P}^{1/2} \pmb{ \Phi }^H \pmb{\phi}_k  + \mathbf{N}_l \pmb{\phi}_k =  \tau_p \sum_{t' \in \mathcal{P}_k} \sqrt{p_{t'}}  \mathbf{h}_{l,t'} + \tilde{ \mathbf{n} }_{l,k},
\end{split}
\end{equation}
where $\tilde{ \mathbf{n} }_{l,k} = \mathbf{N}_l \pmb{\phi}_k \sim \mathcal{CN}( \mathbf{0}, \tau_p \sigma_{\mathrm{UL}}^2 \mathbf{I}_M )$. The MMSE estimate $\hat{ \mathbf{h} }_{l,k}$  of the channel $\mathbf{h}_{l,k}$ between BS $l$ and user $k$ is
\begin{equation}
\hat{\mathbf{h}}_{l,k} = \frac{ \sqrt{p_{k}} \beta_{l,k} }{  \tau_p \sum_{t' \in \mathcal{P}_k } p_{t'} \beta_{l,t'} + \sigma_{ \mathrm{UL} }^2   } \mathbf{Y}_l \pmb{\phi}_k
\end{equation}
and the estimation error is defined as
\begin{equation} \label{eq: Error-Vector}
\mathbf{e}_{l,k} = \hat{\mathbf{h}}_{l,k} - \mathbf{h}_{l,k}.
\end{equation}
Consequently, the channel estimate and the estimation error are independent and distributed as
\begin{align}
\label{eq: Estimate-Channel-Distribution}
\hat{ \mathbf{h} }_{l,k} & \sim \mathcal{CN} \left(  \mathbf{0},  \theta_{l,k} \mathbf{I}_M \right), \\
\label{eq: Estimation-Error-Distribution}
 \mathbf{e}_{l,k} & \sim \mathcal{CN} \left(  \mathbf{0}, \left( \beta_{l,k} - \theta_{l,k} \right)  \mathbf{I}_M  \right),
\end{align}
where
\begin{equation}
\theta_{l,k} = \frac{ p_k  \tau_p \beta_{l,k}^2 }{ \tau_p \sum_{t' \in \mathcal{P}_k } p_{t'} \beta_{l,t'} + \sigma_{ \mathrm{UL} }^2 }.
\end{equation}
\end{lemma}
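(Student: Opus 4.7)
The plan is to identify $\mathbf{Y}_l \pmb{\phi}_k$ as a sufficient statistic for estimating $\mathbf{h}_{l,k}$ and then apply the standard MMSE formula for jointly circularly symmetric complex Gaussian vectors. First I would right-multiply the pilot model \eqref{eq: UL-Recieved-Pilot-Matrix} by $\pmb{\phi}_k$ and use the orthogonality condition $\pmb{\phi}_t^H \pmb{\phi}_k = \tau_p \mathbbm{1}(t \in \mathcal{P}_k)$ to collapse the sum over all users to a sum over $\mathcal{P}_k$, yielding the displayed expression $\mathbf{Y}_l \pmb{\phi}_k = \tau_p \sum_{t' \in \mathcal{P}_k} \sqrt{p_{t'}}\,\mathbf{h}_{l,t'} + \tilde{\mathbf{n}}_{l,k}$ with $\tilde{\mathbf{n}}_{l,k} \sim \mathcal{CN}(\mathbf{0}, \tau_p \sigma_{\mathrm{UL}}^2 \mathbf{I}_M)$ because the columns of $\mathbf{N}_l$ are i.i.d.\ and $\|\pmb{\phi}_k\|^2 = \tau_p$.

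Next I would compute the two covariances needed for the Gauss--Markov/MMSE formula. Since $\mathbf{h}_{l,t'} \sim \mathcal{CN}(\mathbf{0}, \beta_{l,t'}\mathbf{I}_M)$ are independent across $t'$ and independent of $\tilde{\mathbf{n}}_{l,k}$, I get the cross-covariance $\mathbb{E}\{\mathbf{h}_{l,k}(\mathbf{Y}_l\pmb{\phi}_k)^H\} = \tau_p \sqrt{p_k}\,\beta_{l,k}\mathbf{I}_M$ and the auto-covariance $\mathbb{E}\{\mathbf{Y}_l\pmb{\phi}_k(\mathbf{Y}_l\pmb{\phi}_k)^H\} = \bigl(\tau_p^2 \sum_{t' \in \mathcal{P}_k} p_{t'}\beta_{l,t'} + \tau_p \sigma_{\mathrm{UL}}^2\bigr)\mathbf{I}_M$. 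The linear MMSE estimator of a zero-mean jointly Gaussian vector equals the true conditional mean, so
\begin{equation*}
\hat{\mathbf{h}}_{l,k} = \mathbb{E}\{\mathbf{h}_{l,k}(\mathbf{Y}_l\pmb{\phi}_k)^H\}\bigl(\mathbb{E}\{\mathbf{Y}_l\pmb{\phi}_k(\mathbf{Y}_l\pmb{\phi}_k)^H\}\bigr)^{-1}\mathbf{Y}_l\pmb{\phi}_k,
\end{equation*}
and after substituting the two covariances and cancelling one factor of $\tau_p$ I recover exactly the stated expression for $\hat{\mathbf{h}}_{l,k}$.

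For the distributions in \eqref{eq: Estimate-Channel-Distribution}--\eqref{eq: Estimation-Error-Distribution}, I would observe that $\hat{\mathbf{h}}_{l,k}$ is a deterministic scalar multiple of $\mathbf{Y}_l\pmb{\phi}_k$, hence zero-mean circularly symmetric complex Gaussian with covariance equal to that scalar squared times the auto-covariance computed above; simplifying gives $\theta_{l,k}\mathbf{I}_M$ with $\theta_{l,k}$ as stated. The estimation error $\mathbf{e}_{l,k} = \hat{\mathbf{h}}_{l,k} - \mathbf{h}_{l,k}$ is likewise jointly Gaussian with $\hat{\mathbf{h}}_{l,k}$, and its covariance is obtained either directly from $\mathbb{E}\{\mathbf{e}_{l,k}\mathbf{e}_{l,k}^H\} = \beta_{l,k}\mathbf{I}_M - \theta_{l,k}\mathbf{I}_M$ (using the orthogonality principle $\mathbb{E}\{\mathbf{e}_{l,k}\hat{\mathbf{h}}_{l,k}^H\} = \mathbf{0}$) or by direct computation.

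The main subtlety, rather than any difficult calculation, is justifying that $\hat{\mathbf{h}}_{l,k}$ and $\mathbf{e}_{l,k}$ are \emph{independent} and not merely uncorrelated. This follows because in the jointly Gaussian setting the orthogonality principle $\mathbb{E}\{\mathbf{e}_{l,k}\hat{\mathbf{h}}_{l,k}^H\}=\mathbf{0}$ upgrades from decorrelation to full statistical independence, so the two vectors in \eqref{eq: Estimate-Channel-Distribution}--\eqref{eq: Estimation-Error-Distribution} can be treated as independent throughout the subsequent SE analysis. Every other step is routine bookkeeping of scalars.
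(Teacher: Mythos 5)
Your proposal is correct and follows exactly the route the paper intends: the paper's proof is a one-line appeal to ``standard MMSE estimation of Gaussian random variables,'' and your argument is simply that standard derivation written out in full (sufficient statistic $\mathbf{Y}_l\pmb{\phi}_k$, cross- and auto-covariances, the Gaussian MMSE formula, and joint Gaussianity upgrading the orthogonality principle to independence). All the scalar bookkeeping checks out, including the cancellation of one factor of $\tau_p$ and the resulting $\theta_{l,k}$.
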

\begin{proof}  
The proof follows from the standard MMSE estimation of Gaussian random variables \cite{Kay1993a}. 
\end{proof}

In a compact form, each BS $l$ produces a channel estimate matrix $\widehat{ \mathbf{H}}_l = [ \hat{ \mathbf{h} }_{l,1}, \ldots, \hat{ \mathbf{h} }_{l,k}] \in \mathbb{C}^{ M \times K} $  and the mismatch with the true channel matrix $\mathbf{H}_l$ is expressed by the uncorrelated error matrix $ \mathbf{E}_l = [\mathbf{e}_{l,1}, \ldots, \mathbf{e}_{l,K}] \in \mathbb{C}^{ M \times K}$. Lemma \ref{Lemma:ChannelEstimate} provides the statistical properties of the channel estimates that are needed to analyze utility functions like the DL ergodic SE in multi-cell Massive MIMO systems. At this point, we note that the channel estimates of two users $t$ and $k$ in the set $\mathcal{P}_k$ are correlated since they use the same pilot. Mathematically, they are only different from each other by a scaling factor
\begin{equation} \label{eq:Channel_Relationship}
\hat{\mathbf{h}}_{l,k} = \frac{ \sqrt{p_k} \beta_{l,k}}{\sqrt{p_t} \beta_{l,t} } \hat{\mathbf{h}}_{l,t}.
\end{equation}
From the distributions of channel estimates and estimation errors, we further formulate the joint user association and QoS optimization problems, which are the main goals of this paper. One can also analyze the UL performance, but we leave this for future work due to space limitations. 
\vspace*{-0.2cm}
\subsection{Downlink Data Transmission Model} \label{Downlink Data Transmission Model}

Let us denote $\gamma^{\mathrm{DL}}$ as the fraction of the $\tau_c - \tau_p$ data symbols per coherence interval that are used for DL payload transmission, hence $0 < \gamma^{\mathrm{DL}} \leq 1$ and the number of DL symbols is $\gamma^{\mathrm{DL}}(\tau_c - \tau_p)$. We assume that each BS is allowed to transmit to each user but sends a different data symbol than the other BSs. This is referred to as non-coherent joint transmission \cite{Tanbourgi2014, Apelfrojd2014,Bjornson2010c} and it is less complicated to implement than coherent joint transmission which requires phase-synchronization between the BSs.\footnote{This paper investigates whether or not joint transmission can bring substantial performance improvements to Massive MIMO under ideal backhaul conditions. Note that non-coherent joint transmission requires no extensive backhaul signaling, since the BSs send separate data streams and do not require any instantaneous channel knowledge from other cells.} At BS $l$, the transmitted signal $\mathbf{x}_l$ is selected as 
\begin{equation}
\mathbf{x}_l = \sum_{ t =1 }^{K} \sqrt{ \rho_{l,t} } \mathbf{w}_{l,t} s_{l,t}.
\end{equation}
Here the scalar data symbol $s_{l,t}$, which BS $l$ intends to transmit to user $t$, has unit power $\mathbb{E} \{ | s_{l,t }|^2 \} = 1$ and $\rho_{l,t}$ stands for the transmit power allocated to this particular user. In addition, the corresponding linear precoding vector $ \mathbf{w}_{l,t} \in \mathbb{C}^{M}$ determines the spatial directivity of the signal sent to this user.  We notice that user $t$ is associated with BS $l$ if and only if $\rho_{l,t} \neq 0$, and each user can be associated with multiple BSs. We will later optimize the user association and prove that it is optimal to only let a small subset of BSs serve each user. The received signal at an arbitrary user $k$ is modeled as
\begin{equation} \label{eq: Downlink-Signal}
\begin{split}
& y_k 
 = \sum_{i = 1}^{L} \sqrt{ \rho_{i,k} } \mathbf{h}_{i,k}^H \mathbf{w}_{i,k} s_{i,k} + \sum\limits_{i =1 }^{L} \sum_{ \substack{t =1 \\ t \neq k} }^{K} \sqrt{ \rho_{i,t} } \mathbf{h}_{i,k}^H \mathbf{w}_{i,t} s_{i,t} + n_k. 
\end{split}
\end{equation}
The first part in \eqref{eq: Downlink-Signal} is the superposition of desired signals that user $k$ would like to detect. The second part is multi-user interference that degrades the quality of the detected signals. The third part is the additive white noise $n_k \sim \mathcal{CN} (0, \sigma_{ \mathrm{DL} }^2 )$. 

To avoid spending precious DL resources on pilot signaling, we suppose that user $k$ does not have any information about the current channel realizations but only knows the channel statistics. This works well in Massive MIMO systems due to the channel hardening \cite{Bjornson2016b}. User $k$ would like to detect all the desired signals coming from the BSs. To achieve low computational complexity, we assume that each user detects its different data signals sequentially and applies successive interference cancellation \cite{Tse2005a,Li2015}. Although this heuristic decoding method is suboptimal since we make practical assumptions that the BSs have to do channel estimation and have limited power budget, it is amenable to implement and is known to be optimal for example under perfect channel state information. Suppose that user $k$ is currently detecting the signal sent by an arbitrary BS $l$, say $s_{l,k}$, and possesses the detected signals of the $l-1$ previous BSs  but not their instantaneous channel realizations. From these assumptions, a lower bound on the ergodic capacity between BS $l$ and user $k$ is given in Proposition \ref{Proposition: Rate}. 
\begin{proposition} \label{Proposition: Rate}
If user $k$ knows the signals sent to it by the first $l-1$ BSs in the network, then a lower bound on the DL ergodic capacity between BS $l$ and user $k$ is
\begin{equation}
R_{l, k} = \gamma^{\mathrm{DL}} \left( 1 - \frac{\tau_p}{\tau_c} \right) \log_2 \left( 1 + \mathrm{SINR}_{l,k} \right) \quad \textrm{[bit/symbol]},
\end{equation}
where the SINR, $\mathrm{SINR}_{l,k}$, is given as
\begin{equation} \label{eq: SINR_lk}
\frac{ \rho_{l,k} | \mathbb{E} \{ \mathbf{h}_{l,k}^H \mathbf{w}_{l,k} \} |^2 }{  \sum\limits_{i = 1}^{L} \sum\limits_{t =1}^{K} \rho_{i,t} \mathbb{E} \{ | \mathbf{h}_{i,k}^H \mathbf{w}_{i,t} |^2 \} - \sum\limits_{ i=1}^{l}  \rho_{i,k} | \mathbb{E} \{ \mathbf{h}_{i,k}^H \mathbf{w}_{i,k} \} |^2 + \sigma_{ \mathrm{DL} }^2 }.
\end{equation}
\end{proposition}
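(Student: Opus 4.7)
The plan is to use the standard "use-and-then-forget" capacity bound (Medard; Hassibi--Hochwald) combined with successive interference cancellation at user $k$. First I would decompose the effective channel of each desired stream around its mean, writing, for every BS $i$,
\begin{equation*}
\mathbf{h}_{i,k}^H\mathbf{w}_{i,k}=\mathbb{E}\{\mathbf{h}_{i,k}^H\mathbf{w}_{i,k}\}+\bigl(\mathbf{h}_{i,k}^H\mathbf{w}_{i,k}-\mathbb{E}\{\mathbf{h}_{i,k}^H\mathbf{w}_{i,k}\}\bigr),
\end{equation*}
so that \eqref{eq: Downlink-Signal} can be split into a deterministic mean-gain term plus a zero-mean beamforming-gain uncertainty term, plus interference, plus noise. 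Since user $k$ knows only the channel statistics, this decomposition is the only one it can actually use.

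Next I would apply the SIC step. By assumption user $k$ has already decoded $s_{1,k},\ldots,s_{l-1,k}$. Having only the statistics, it can subtract the deterministic portion $\sqrt{\rho_{i,k}}\mathbb{E}\{\mathbf{h}_{i,k}^H\mathbf{w}_{i,k}\}s_{i,k}$ for each $i<l$; the residual fluctuation of those streams remains and must be treated as additional uncorrelated interference. After this cancellation the processed signal used to decode $s_{l,k}$ takes the form
\begin{equation*}
\tilde y_k=\sqrt{\rho_{l,k}}\,\mathbb{E}\{\mathbf{h}_{l,k}^H\mathbf{w}_{l,k}\}\,s_{l,k}+z_k,
\end{equation*}
where $z_k$ aggregates: (i) the zero-mean fluctuation of $\mathbf{h}_{i,k}^H\mathbf{w}_{i,k}$ for $i=1,\ldots,l$, (ii) the full contribution of $s_{i,k}$ for $i>l$, (iii) every cross-user term $s_{i,t}$ with $t\neq k$, and (iv) the receiver noise $n_k$.

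The key technical step is to verify that $z_k$ is uncorrelated with $s_{l,k}$ (this follows because the data symbols are independent, unit-variance, zero-mean, and because the fluctuations have zero mean by construction) and then to compute $\mathbb{E}\{|z_k|^2\}$ by summing the variances of its mutually uncorrelated pieces. Using $\mathrm{Var}(\mathbf{h}_{i,k}^H\mathbf{w}_{i,k})=\mathbb{E}\{|\mathbf{h}_{i,k}^H\mathbf{w}_{i,k}|^2\}-|\mathbb{E}\{\mathbf{h}_{i,k}^H\mathbf{w}_{i,k}\}|^2$ for $i\le l$, and $\mathbb{E}\{|\mathbf{h}_{i,k}^H\mathbf{w}_{i,t}|^2\}$ otherwise, the contributions collapse to
\begin{equation*}
\mathbb{E}\{|z_k|^2\}=\sum_{i=1}^{L}\sum_{t=1}^{K}\rho_{i,t}\mathbb{E}\{|\mathbf{h}_{i,k}^H\mathbf{w}_{i,t}|^2\}-\sum_{i=1}^{l}\rho_{i,k}|\mathbb{E}\{\mathbf{h}_{i,k}^H\mathbf{w}_{i,k}\}|^2+\sigma_{\mathrm{DL}}^2,
\end{equation*}
which is exactly the denominator in \eqref{eq: SINR_lk}. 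This bookkeeping is the main place where one can slip up, especially in justifying why the $i=l$ term enters with a minus sign (self-interference from the present fluctuation) and why cross terms between distinct data symbols vanish.

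Finally, I would invoke the worst-case-independent-Gaussian-noise argument: for a Gaussian input $s_{l,k}\sim\mathcal{CN}(0,1)$ passing through a deterministic gain and corrupted by an uncorrelated additive disturbance of variance $\mathbb{E}\{|z_k|^2\}$, the mutual information is lower bounded by $\log_2(1+\mathrm{SINR}_{l,k})$ with $\mathrm{SINR}_{l,k}$ as in \eqref{eq: SINR_lk}. Multiplying by the pre-log fraction $\gamma^{\mathrm{DL}}(1-\tau_p/\tau_c)$, which accounts for the UL pilot overhead and the DL share of the coherence block, yields the claimed per-stream rate $R_{l,k}$. The bound is achievable with Gaussian codebooks under the stated decoder assumptions, which is what makes it a valid lower bound on the ergodic capacity.
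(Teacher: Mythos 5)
Your proposal is correct and follows essentially the same route as the paper's proof in Appendix \ref{Appendix Proposition: Rate}: subtracting the deterministic mean contributions of the already-decoded streams, applying the add-and-subtract (use-and-then-forget) decomposition to the current stream, summing the variances of the mutually uncorrelated residual terms to obtain exactly the denominator of \eqref{eq: SINR_lk}, and invoking the worst-case uncorrelated Gaussian noise bound with the pilot-overhead pre-log factor. Your bookkeeping of why the $i\le l$ terms enter as variances (hence the subtracted sum) while the $i>l$ and cross-user terms enter with their full second moments matches the paper's computation of $\mathbb{E}\{|\mathtt{UN}_{l,k}|^2\}$.
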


\begin{proof}
The proof is given in Appendix \ref{Appendix Proposition: Rate}.
\end{proof}

Each user would like to detect all desired signals coming from the $L$ BSs, or at least the ones that transmit with non-zero powers. Proposition \ref{Proposition: Rate} gives hints to formulate a lower bound on the DL ergodic sum capacity of user $k$. We compute this bound by applying the successive decoding technique described in \cite{Tse2005a,Li2015}.  In detail, the user first detects the signal from BS $1$, while the remaining desired signals are treated as interference. From the $2$nd BS onwards, say BS $l$, user $k$ ``knows" the transmit signals of the $l-1$ previous BSs and can partially subtract them from the received signal (using its statistical channel knowledge). It then focuses on detecting the signal $s_{l,k}$ and considers the desired signals from BS $l+1$ to BS $L$ as interference. By utilizing this successive interference cancellation technique, a lower bound on the DL sum SE at user $k$ is provided in Theorem $\ref{Theorem-Lower-Bound-Rate}$.
\begin{theorem} \label{Theorem-Lower-Bound-Rate}
A lower bound on the DL ergodic sum capacity of an arbitrary user $k$ is 
\begin{equation} \label{eq: Sum-Rate-k}
R_k = \gamma^{\mathrm{DL}} \left( 1 - \frac{\tau_p}{\tau_c} \right) \log_2 (1 + \mathrm{SINR}_k ) \quad \textrm{[bit/symbol]},
\end{equation}
where the value of the effective SINR, $\mathrm{SINR}_k$, is given in \eqref{eq: SINR_k}. 
\begin{figure*} [t]
\begin{equation} \label{eq: SINR_k}
\mathrm{SINR}_k = \frac{ \sum\limits_{ i=1}^{L} \rho_{i,k} | \mathbb{E} \{ \mathbf{h}_{i,k}^H \mathbf{w}_{i,k} \} |^2 }{ \sum\limits_{ i =1}^{L}  \rho_{i,k} (\mathbb{E} \{ | \mathbf{h}_{i,k}^H \mathbf{w}_{i,k} |^2 \} - | \mathbb{E} \{ \mathbf{h}_{i,k}^H \mathbf{w}_{i,k} \} |^2 ) + \sum\limits_{ i=1 }^{L} \sum\limits_{ \substack{t =1 \\ t \neq k} }^{K} \rho_{i,t} \mathbb{E} \{ | \mathbf{h}_{i,k}^H \mathbf{w}_{i,t} |^2 \} + \sigma_{ \mathrm{DL} }^2 }.
\end{equation}
\hrulefill
\end{figure*}
\end{theorem}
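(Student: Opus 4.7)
The plan is to build $R_k$ as the sum $\sum_{l=1}^{L} R_{l,k}$ of the per-BS lower bounds already established in Proposition \ref{Proposition: Rate}, and then collapse that sum into a single logarithm with SINR given by \eqref{eq: SINR_k} via a telescoping identity. The prefactor $\gamma^{\mathrm{DL}}(1-\tau_p/\tau_c)$ is common to every term, so it factors out and the real work is algebraic.

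I would start with the modeling step. After successively detecting $s_{1,k},\ldots,s_{l-1,k}$, user $k$ subtracts their statistical means (it has only channel statistics, not instantaneous CSI) and applies Proposition \ref{Proposition: Rate} to the residual signal to decode $s_{l,k}$, treating the streams $s_{l+1,k},\ldots,s_{L,k}$ as additional interference. By the chain rule of mutual information, the total information rate obtained for user $k$ is at least $\sum_{l=1}^{L} R_{l,k}$, and this is exactly how the SINR denominator in \eqref{eq: SINR_lk} has been structured (the sum $\sum_{i=1}^{l}\rho_{i,k}|\mathbb{E}\{\mathbf{h}_{i,k}^{H}\mathbf{w}_{i,k}\}|^{2}$ that is subtracted inside the denominator encodes precisely the removal of the $l-1$ already-decoded coherent means together with the usual use-and-forget treatment of the $l$-th stream).

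Next I would introduce the shorthand
\begin{equation*}
D_l \;\triangleq\; \sum_{i=1}^{L}\sum_{t=1}^{K}\rho_{i,t}\,\mathbb{E}\{|\mathbf{h}_{i,k}^{H}\mathbf{w}_{i,t}|^{2}\}\;-\;\sum_{i=1}^{l}\rho_{i,k}\,|\mathbb{E}\{\mathbf{h}_{i,k}^{H}\mathbf{w}_{i,k}\}|^{2}\;+\;\sigma_{\mathrm{DL}}^{2},
\end{equation*}
so that \eqref{eq: SINR_lk} reads $\mathrm{SINR}_{l,k}=(D_{l-1}-D_{l})/D_{l}$ and therefore $1+\mathrm{SINR}_{l,k}=D_{l-1}/D_{l}$. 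Summing the per-BS rates then produces a telescoping product,
\begin{equation*}
\sum_{l=1}^{L}\log_{2}\!\bigl(1+\mathrm{SINR}_{l,k}\bigr)\;=\;\log_{2}\!\left(\prod_{l=1}^{L}\frac{D_{l-1}}{D_{l}}\right)\;=\;\log_{2}\!\left(\frac{D_{0}}{D_{L}}\right),
\end{equation*}
after which it remains only to identify $D_{0}/D_{L}$ with $1+\mathrm{SINR}_{k}$.

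The final step is a bookkeeping check. By definition $D_{0}-D_{L}=\sum_{i=1}^{L}\rho_{i,k}|\mathbb{E}\{\mathbf{h}_{i,k}^{H}\mathbf{w}_{i,k}\}|^{2}$, which is exactly the numerator of \eqref{eq: SINR_k}; and regrouping the diagonal terms of the double sum in $D_{L}$ as $\rho_{i,k}\bigl(\mathbb{E}\{|\mathbf{h}_{i,k}^{H}\mathbf{w}_{i,k}|^{2}\}-|\mathbb{E}\{\mathbf{h}_{i,k}^{H}\mathbf{w}_{i,k}\}|^{2}\bigr)$ while keeping the off-diagonal $t\neq k$ terms intact yields precisely the denominator of \eqref{eq: SINR_k}. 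Multiplying through by $\gamma^{\mathrm{DL}}(1-\tau_p/\tau_c)$ then gives \eqref{eq: Sum-Rate-k}.

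The main obstacle is conceptual rather than technical: one has to argue cleanly that the heuristic SIC decoder, despite the user's lack of instantaneous CSI, still achieves the sum of the lower bounds in Proposition \ref{Proposition: Rate}, so that the chain rule legitimately produces $R_k=\sum_l R_{l,k}$. Once that justification is in place, the telescoping identity and the identification of $D_L$ with the denominator of \eqref{eq: SINR_k} are essentially mechanical.
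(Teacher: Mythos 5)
Your proposal is correct and follows essentially the same route as the paper's own proof: summing the per-BS bounds of Proposition \ref{Proposition: Rate} and telescoping the product $\prod_l (1+\mathrm{SINR}_{l,k})$, with your $D_l$ playing exactly the role of the paper's $\mathcal{A}_{l,k}$ ratios (so $D_0/D_L$ matches the paper's $\mathtt{A}_{\mathtt{Num}}/\mathtt{A}_{\mathtt{Den}}$). The final regrouping of the $t=k$ diagonal terms into the beamforming-uncertainty term is also the same bookkeeping the paper performs.
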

\begin{proof}
The proof is also given in Appendix \ref{Appendix Proposition: Rate}.
\end{proof}

The sum SE expression provided by Theorem \ref{Theorem-Lower-Bound-Rate} has an intuitive structure. The numerator in \eqref{eq: SINR_k} is a summation of the desired signal power sent to user $k$ over the average precoded channels from each BS. It confirms that all signal powers are useful to the users and that BS cooperation in the form of non-coherent joint transmission has the potential to increase the sum SE at the users. The first term in the denominator represents beamforming gain uncertainty, caused by the lack of CSI at the terminal, while the second term is multi-user interference and the third term represents the additive noise. Even though we assume user $k$  starts to decode the transmitted signal from the BS $1$, the BS numbering has no impact on $\textrm{SINR}_{k}$ in \eqref{eq: SINR_k}. As a result, the SE is not affected by the decoding orders. Besides, both the lower bounds in Proposition \ref{Proposition: Rate} and Theorem \ref{Theorem-Lower-Bound-Rate} are derived independently of channel distribution and precoding schemes. Thus, our proposed method for non-coherent joint transmission in Massive MIMO systems is applicable for general scenarios with any channel distribution, any selection of precoding schemes, and any pilot allocation. Next, we show that the expressions can be computed in closed form under Rayleigh fading channels, if the BSs utilize MRT or ZF precoding techniques.

\subsection{Achievable Spectral Efficiency under Rayleigh Fading} \label{Achievable-Spectral-Efficiency}
We now assume that the BSs use either MRT or ZF to precode payload data before transmission. Similar to \cite{Bjornson2016a}, the precoding vectors are  described as
\begin{equation}  \label{eq: Linear-Precoding-Vector}
\mathbf{w}_{l,k}  = \begin{cases}  \frac{ \hat{\mathbf{h}}_{l,k} }{ \sqrt{ \mathbb{E} \{  \| \hat{ \mathbf{h}}_{l,k}  \|^{2} \}}}, & \mbox{for MRT,} \\
\frac{ \hat{\mathbf{H}}_{l} \mathbf{r}_{l,k} }{\sqrt{ \mathbb{E} \{ \|\hat{\mathbf{H}}_{l} \mathbf{r}_{l,k}\|^{2} \} }}, & \mbox{for ZF,} \end{cases}
\end{equation}
where $\mathbf{r}_{l,k}$ is the $k$th column of matrix $ ( \widehat{\mathbf{H}}_{l}^{H} \widehat{\mathbf{H}}_{l} )^{-1}$. From the above definition, with the condition $M > K$,  ZF precoding could cancel out interference towards users that BS $l$ is not associated with; this precoding was called full-pilot ZF in \cite{Bjornson2016a}. \footnote{The ZF precoding which we are using here is different from the classical one \cite{Ngo2013a}. More precisely, the classical ZF precoding dedicated to BS $l$ can only cancel out interference towards to the users that are associated with this BS.}. Mathematically, ZF precoding yields the following property
\begin{equation} \label{eq: ZF-Property}
\hat{\mathbf{h}}_{l,t}^H \hat{\mathbf{w}}_{l,k} = \begin{cases}  0, &  t \notin \mathcal{P}_k, \\
\frac{ \sqrt{p_t} \beta_{l,t} }{ \beta_{l,k}\sqrt{ p_k \mathbb{E} \{ \|\widehat{\mathbf{H}}_{l} \mathbf{r}_{l,k}\|^{2} \} }}, & t \in \mathcal{P}_k. \end{cases}
\end{equation}
The lower bound on the ergodic SE in Theorem \ref{Theorem-Lower-Bound-Rate} is obtained in closed forms for MRT and ZF precoding as shown in Corollaries \ref{Corollary-MRT-Rate} and \ref{Corollary-ZF-Rate}.
\begin{corollary} \label{Corollary-MRT-Rate}
For Rayleigh fading channels, if the BSs utilize MRT precoding, then the lower bound on the DL ergodic sum rate in Theorem \ref{Theorem-Lower-Bound-Rate} is simplified to
\begin{equation}
R_k^{ \mathrm{MRT} } = \gamma^{\mathrm{DL}} \left( 1 - \frac{\tau_p}{\tau_c} \right) \log_2 \left( 1 + \mathrm{SINR}_k^{ \mathrm{MRT}} \right) \quad \textrm{[bit/symbol]},
\end{equation}
where the SINR, $\mathrm{SINR}_k^{ \mathrm{MRT}}$,  is
\begin{equation} \label{eq: SINR-MRT}
\frac{ M \sum\limits_{i =1}^{L} \rho_{i,k} \theta_{i,k} }{ M \sum\limits_{i=1}^{L} \sum\limits_{t \in \mathcal{P}_k \setminus \{k\} }  \rho_{i,t} \theta_{i,k}  +  \sum\limits_{i = 1 }^{L} \sum\limits_{ t=1 }^{K} \rho_{i,t} \beta_{i,k} + \sigma_{ \mathrm{DL}}^2 }.
\end{equation}
\begin{proof}
The proof is given in Appendix \ref{Appendix Corollary-MRT-Rate}.
\end{proof}
\end{corollary}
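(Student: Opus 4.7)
The plan is to substitute the MRT precoder $\mathbf{w}_{i,k}=\hat{\mathbf h}_{i,k}/\sqrt{M\theta_{i,k}}$ into the generic SINR formula \eqref{eq: SINR_k} and evaluate each of the four kinds of expectations that appear: the signal mean $\mathbb{E}\{\mathbf h_{i,k}^H\mathbf w_{i,k}\}$, the signal second moment $\mathbb{E}\{|\mathbf h_{i,k}^H\mathbf w_{i,k}|^2\}$, and the interference moments $\mathbb{E}\{|\mathbf h_{i,k}^H\mathbf w_{i,t}|^2\}$ for $t\neq k$, split into the pilot-sharing case $t\in\mathcal P_k$ and the orthogonal-pilot case $t\notin\mathcal P_k$. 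Throughout I will decompose $\mathbf h_{i,k}=\hat{\mathbf h}_{i,k}-\mathbf e_{i,k}$ using Lemma~\ref{Lemma:ChannelEstimate}, exploit independence of $\mathbf e_{i,k}$ from every pilot-based estimate sharing index $k$'s pilot, and use the pilot-contamination identity \eqref{eq:Channel_Relationship}.

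First I would compute the signal terms. Independence of $\mathbf e_{i,k}$ and $\hat{\mathbf h}_{i,k}$ together with $\mathbb{E}\{\hat{\mathbf h}_{i,k}\}=\mathbf 0$ gives $\mathbb{E}\{\mathbf h_{i,k}^H\mathbf w_{i,k}\}=\mathbb{E}\{\|\hat{\mathbf h}_{i,k}\|^2\}/\sqrt{M\theta_{i,k}}=\sqrt{M\theta_{i,k}}$, so $|\mathbb{E}\{\mathbf h_{i,k}^H\mathbf w_{i,k}\}|^2=M\theta_{i,k}$, producing the numerator in \eqref{eq: SINR-MRT}. For the second moment I would expand $|\mathbf h_{i,k}^H\mathbf w_{i,k}|^2$, drop the cross terms by zero-mean independence, and use the standard fourth-moment identity $\mathbb{E}\{\|\mathbf x\|^4\}=M(M+1)\sigma^4$ for $\mathbf x\sim\mathcal{CN}(\mathbf 0,\sigma^2\mathbf I_M)$, together with $\mathbb{E}\{|\mathbf e_{i,k}^H\hat{\mathbf h}_{i,k}|^2\}=(\beta_{i,k}-\theta_{i,k})M\theta_{i,k}$ obtained by conditioning on $\hat{\mathbf h}_{i,k}$. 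The result collapses to $\mathbb{E}\{|\mathbf h_{i,k}^H\mathbf w_{i,k}|^2\}=M\theta_{i,k}+\beta_{i,k}$, so the beamforming-gain-uncertainty term in \eqref{eq: SINR_k} contributes exactly $\beta_{i,k}$ per cell.

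The interference terms split into two subcases, which is the most delicate step and where pilot contamination enters. For $t\notin\mathcal P_k$, $\mathbf w_{i,t}$ is independent of $\mathbf h_{i,k}$, so conditioning on $\mathbf w_{i,t}$ immediately gives $\mathbb{E}\{|\mathbf h_{i,k}^H\mathbf w_{i,t}|^2\}=\beta_{i,k}\mathbb{E}\{\|\mathbf w_{i,t}\|^2\}=\beta_{i,k}$. For $t\in\mathcal P_k\setminus\{k\}$, the scaling identity \eqref{eq:Channel_Relationship} lets me substitute $\hat{\mathbf h}_{i,t}$ in terms of $\hat{\mathbf h}_{i,k}$, repeat the fourth-moment computation, and use the handy ratio $\theta_{i,k}/\theta_{i,t}=p_k\beta_{i,k}^2/(p_t\beta_{i,t}^2)$ which follows from the MMSE expression. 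This again yields $\mathbb{E}\{|\mathbf h_{i,k}^H\mathbf w_{i,t}|^2\}=M\theta_{i,k}+\beta_{i,k}$; the extra $M\theta_{i,k}$ relative to the orthogonal-pilot case is precisely the pilot-contamination interference surviving the array gain. The last step is bookkeeping: grouping the $\beta_{i,k}$ contributions from the gain uncertainty term and from both interference subcases collapses them into $\sum_{i,t=1}^{L,K}\rho_{i,t}\beta_{i,k}$, while the $M\theta_{i,k}$ contributions survive only for $t\in\mathcal P_k\setminus\{k\}$, recovering the denominator of \eqref{eq: SINR-MRT}. The main obstacle is keeping the pilot-sharing dependencies straight so that independence is only invoked where it genuinely holds; once the scaling identity \eqref{eq:Channel_Relationship} is applied cleanly, the algebra reduces to the two standard complex-Gaussian moment formulas above.
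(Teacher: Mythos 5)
Your proposal is correct and follows essentially the same route as the paper's Appendix~\ref{Appendix Corollary-MRT-Rate}: decompose $\mathbf{h}_{i,k}=\hat{\mathbf{h}}_{i,k}-\mathbf{e}_{i,k}$, split the interference sum according to $\mathcal{P}_k$, invoke the scaling identity \eqref{eq:Channel_Relationship} together with the complex-Gaussian fourth-moment formula (the paper cites Lemma 2.9 of Tulino--Verd\'u for the same purpose), and collect the $\beta_{i,k}$ and $M\theta_{i,k}$ contributions. All intermediate moment values you state ($M\theta_{i,k}+\beta_{i,k}$ for pilot-sharing users, $\beta_{i,k}$ otherwise) agree with the paper's computation.
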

This corollary reveals the merits of MRT precoding for multi-cell Massive MIMO DL systems: The signal power increases proportionally to $M$ thanks to the array gain. The first term in the denominator is pilot contamination that increases proportionally to $M$ and makes the achievable rate saturated when $M \rightarrow \infty$ \cite{Jose2011b}. We also stress that a properly selected pilot reuse index set $\mathcal{P}_k$, for example the so-called pilot scheduling in \cite{Zhu2015a}, \cite{Xu2015a}, can significantly increase $\theta_{i,k}$ and thereby increase the SINR. In contrast, the regular interference is unaffected by the number of BS antennas. Finally, the non-coherent combination of received signals at user $k$ adds up the powers from multiple BSs and can give stronger signal gain than if only one BS serves the user.

\begin{corollary} \label{Corollary-ZF-Rate}
For Rayleigh fading  channels, if the BSs utilize ZF precoding, then the lower bound on the DL ergodic sum capacity in Theorem \ref{Theorem-Lower-Bound-Rate} is simplified to
\begin{equation}
R_k^{ \mathrm{ZF} } = \gamma^{\mathrm{DL}} \left( 1 - \frac{\tau_p}{\tau_c} \right) \log_2 \left( 1 + \mathrm{SINR}_k^{ \mathrm{ZF}} \right) \quad \textrm{[bit/symbol]},
\end{equation}
where the SINR, $\mathrm{SINR}_k^{ \mathrm{ZF}}$,  is
\begin{equation} \label{eq: SINR-ZF}
\frac{ (M-K) \sum\limits_{i =1}^{L} \rho_{i,k} \theta_{i,k} }{ (M-K) \sum\limits_{i =1}^{L} \sum\limits_{t \in \mathcal{P}_k \setminus \{k\} }  \rho_{i,t} \theta_{i,k} + \sum\limits_{ i =1 }^{L} \sum\limits_{ t=1 }^{K} \rho_{i,t} \left( \beta_{i,k} - \theta_{i,k} \right)+ \sigma_{ \mathrm{DL}}^2 }.
\end{equation}
\end{corollary}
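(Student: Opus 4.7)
\textbf{Proof plan for Corollary \ref{Corollary-ZF-Rate}.} The strategy is to substitute the ZF precoder \eqref{eq: Linear-Precoding-Vector} into the generic SINR expression \eqref{eq: SINR_k} and evaluate every expectation in closed form using the structural identity \eqref{eq: ZF-Property}. Throughout, I would decompose $\mathbf{h}_{i,k} = \hat{\mathbf{h}}_{i,k} - \mathbf{e}_{i,k}$ as in \eqref{eq: Error-Vector} and exploit that the estimation error $\mathbf{e}_{i,k}$ is zero-mean, has covariance $(\beta_{i,k}-\theta_{i,k})\mathbf{I}_M$, and is independent of $\hat{\mathbf{H}}_i$ and hence of every precoder $\mathbf{w}_{i,t}$. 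With this decomposition all cross terms vanish, so each expectation splits cleanly into an ``estimate part'' handled by \eqref{eq: ZF-Property} and an ``error part'' that contributes $\beta_{i,k}-\theta_{i,k}$ whenever the precoder has unit expected norm.

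For the numerator, setting $t=k$ in \eqref{eq: ZF-Property} gives the deterministic identity $\hat{\mathbf{h}}_{i,k}^H\mathbf{w}_{i,k} = 1/\sqrt{\mathbb{E}\{\|\hat{\mathbf{H}}_i \mathbf{r}_{i,k}\|^2\}}$. Writing $\|\hat{\mathbf{H}}_i \mathbf{r}_{i,k}\|^2 = [(\hat{\mathbf{H}}_i^H \hat{\mathbf{H}}_i)^{-1}]_{kk}$, rescaling the estimate columns to unit variance via \eqref{eq: Estimate-Channel-Distribution}, and applying the standard identity for the diagonal of an inverse central complex Wishart matrix with $M$ degrees of freedom, one obtains $\mathbb{E}\{\|\hat{\mathbf{H}}_i \mathbf{r}_{i,k}\|^2\} = 1/[(M-K)\theta_{i,k}]$. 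This recovers $|\mathbb{E}\{\mathbf{h}_{i,k}^H\mathbf{w}_{i,k}\}|^2 = (M-K)\theta_{i,k}$, which is exactly the array gain appearing in the numerator of \eqref{eq: SINR-ZF}. The beamforming-uncertainty term in the denominator of \eqref{eq: SINR_k} then collapses to $\mathbb{E}\{|\mathbf{e}_{i,k}^H\mathbf{w}_{i,k}|^2\} = \beta_{i,k}-\theta_{i,k}$, since the estimate contribution is deterministic and cancels against the squared mean.

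For the multi-user interference term with $t \neq k$, the error part again contributes $\beta_{i,k}-\theta_{i,k}$, so only the estimate part $|\hat{\mathbf{h}}_{i,k}^H \mathbf{w}_{i,t}|^2$ requires care. When $t \notin \mathcal{P}_k$ this vanishes by \eqref{eq: ZF-Property}, leaving only the error piece. When $t \in \mathcal{P}_k \setminus \{k\}$, \eqref{eq: ZF-Property} yields a deterministic scalar that, after substituting $\mathbb{E}\{\|\hat{\mathbf{H}}_i \mathbf{r}_{i,t}\|^2\} = 1/[(M-K)\theta_{i,t}]$ and using the within-group proportionality $\theta_{i,t}/\theta_{i,k} = p_t\beta_{i,t}^2/(p_k\beta_{i,k}^2)$ forced by $\mathcal{P}_t = \mathcal{P}_k$ and the definition of $\theta_{l,k}$, collapses exactly to $(M-K)\theta_{i,k}$. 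Summing over $i$ and $t$ then reproduces the two denominator terms in \eqref{eq: SINR-ZF}: the pilot-contamination term $(M-K)\sum_i \sum_{t \in \mathcal{P}_k\setminus\{k\}} \rho_{i,t}\theta_{i,k}$ from the estimate parts, and $\sum_i \sum_{t=1}^K \rho_{i,t}(\beta_{i,k}-\theta_{i,k})$ from the error parts (combined with the beamforming-uncertainty contribution at $t=k$). Adding the noise $\sigma_{\mathrm{DL}}^2$ completes the derivation.

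The main obstacle is the Wishart-inverse expectation needed to extract the $(M-K)$ factor; once this is in hand, the pilot-group cancellation that yields $(M-K)\theta_{i,k}$ uniformly across all $t \in \mathcal{P}_k$ is the only remaining algebra and is dictated almost mechanically by \eqref{eq: ZF-Property} together with the explicit form of $\theta_{i,k}$. The structure of the argument parallels the MRT case of Corollary \ref{Corollary-MRT-Rate}, but every MRT expectation that relied on Gaussian quartics is replaced here by a deterministic evaluation thanks to the orthogonality enforced by ZF.
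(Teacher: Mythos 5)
Your proposal is correct and follows essentially the same route as the paper's proof: the estimate/error decomposition with $\mathbb{E}\{|\mathbf{e}_{i,k}^H\mathbf{w}_{i,t}|^2\}=\beta_{i,k}-\theta_{i,k}$, the inverse-Wishart identity $\mathbb{E}\{\|\hat{\mathbf{H}}_i\mathbf{r}_{i,t}\|^2\}=1/[(M-K)\theta_{i,t}]$, and the ZF property \eqref{eq: ZF-Property} combined with the within-pilot-group scaling to turn each $t\in\mathcal{P}_k$ term into $(M-K)\theta_{i,k}$. The only cosmetic difference is that you evaluate the $t=k$ beamforming-uncertainty term directly from the deterministic value of $\hat{\mathbf{h}}_{i,k}^H\mathbf{w}_{i,k}$, whereas the paper computes the full interference sum over $t\in\mathcal{P}_k$ and then subtracts $|\mathbb{E}\{\mathbf{h}_{i,k}^H\mathbf{w}_{i,k}\}|^2$; the algebra is identical.
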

\begin{proof}
The proof is given in Appendix \ref{Appendix Corollary-ZF-Rate}.
\end{proof}
The benefits of the array gain, BS non-coherent joint transmission, and pilot contamination effects shown by MRT are also inherited by ZF. The main distinction is that  MRT precoding only aims to maximize the signal-to-noise (SNR) ratio but does not pay attention to the multi-user interference. Meanwhile, ZF sacrifices some of the array gain to mitigate multi-user interference. The DL SE is limited by pilot contamination and the advantages of using mutually orthogonal pilot sequences are shown in Remark \ref{Remark1}.
\begin{remark} \label{Remark1}
When the number of BS antennas $M \rightarrow \infty$ and the number of users $K$ is fixed, the SINR values in \eqref{eq: SINR-MRT} for MRT and \eqref{eq: SINR-ZF} for ZF converge to $ \frac{ \sum_{i =1}^{L} \rho_{i,k} \theta_{i,k} }{ \sum_{i=1}^{L} \sum_{t \in \mathcal{P}_k \setminus \{k\} }  \rho_{i,t} \theta_{i,k} } $ meaning that the gain of adding more antennas diminishes. In contrast, if the users utilize mutually orthogonal pilot sequences, i.e., $\tau_p \geq K$, then adding up more BS antennas is always beneficial since the SINR value of user $k$ is given for MRT and ZF as 
\begin{equation} \label{eq: SINR-MRTOrthogonal}
\mathrm{SINR}_k^{\mathrm{MRT}} = 
\frac{ M \sum\limits_{i =1}^{L} \frac{\rho_{i,k} p_k \tau_p \beta_{i,k}^2 }{p_k \tau_p \beta_{i,k} + \sigma_{\mathrm{UL}}^2 } }{\sum\limits_{i = 1 }^{L} \sum\limits_{ t=1 }^{K} \rho_{i,t} \beta_{i,k} + \sigma_{ \mathrm{DL}}^2 },
\end{equation}
\begin{equation} \label{eq: SINR-ZFOrthogonal}
 \mathrm{SINR}_k^{\mathrm{ZF}} =  \frac{ (M -K) \sum\limits_{i =1}^{L} \frac{\rho_{i,k} p_k \tau_p \beta_{i,k}^2 }{p_k \tau_p \beta_{i,k} + \sigma_{\mathrm{UL}}^2 } }{\sum\limits_{i = 1 }^{L} \sum\limits_{ t=1 }^{K}  \frac{ \rho_{i,t} \beta_{i,k} \sigma_{\mathrm{UL}}^2}{p_k \tau_p \beta_{i,k} + \sigma_{\mathrm{UL}}^2} + \sigma_{ \mathrm{DL}}^2 }.
\end{equation}
\end{remark} 
 Note that for both MRT and ZF precoding, the DL ergodic SE not only depends on the channel estimation quality which can be improved by optimizing the pilot powers but also heavily depends on the power allocation at the BSs; that is, how the transmit powers $\rho_{i,t}$ are selected. In this paper, we only focus on the DL transmission, so \Cref{Section:Power-Optimization,Section:Optimal-Solution,Section:Max-Min-QoS} investigate different ways to jointly optimize the DL power allocation and user association with the predetermined pilot power.

\section{Downlink Transmit Power Optimization for Massive MIMO Systems} \label{Section:Power-Optimization}

The transmit power at BS $i$ depends on the traffic load over the coverage area and is limited by the peak radio frequency output power $P_{\mathrm{max},i}$, which defines the maximum power that can be utilized at each BS \cite{Bjornson2013e}. The transmit power $P_{ \mathrm{trans},  i}$ is computed as
\begin{equation} \label{eq: Transmit-Power}
P_{\mathrm{trans},i} = \mathbb{E} \{ \| \mathbf{x}_i \|^2 \} = \sum_{t=1 }^{K} \rho_{i,t}  \mathbb{E} \{ \| \mathbf{w}_{i,t} \|^2 \} = \sum_{t=1 }^{K} \rho_{i,t} .
\end{equation}
The transmit power consumption at BS $i$ that takes the power amplifier efficiency $ \Delta_i $ into account is modeled as
\begin{equation} \label{eq: Power-Consumption}
P_i = \Delta_i P_{\mathrm{trans},i}, \; 0 \leq P_{\mathrm{trans},i} \leq P_{\mathrm{max},i}.
\end{equation}
Here, $ \Delta_i $ depends on the BS technology \cite{EARTH_D23_short} and affects the power allocation and user association problems. Specifically, the values $\Delta_i$ may not be the same, for example, the BSs are equipped with the different hardware quality.

The main goal of a Massive MIMO network is to deliver a promised QoS to the users, while consuming as little power as possible. In this paper, we formulate it as a power minimization problem under user-specific SE constraints as
\begin{equation} \label{Optimization: General-Form}
\begin{aligned}
& \underset{ \{\rho_{i,t} \geq 0 \} }{\textrm{minimize}}
& & \sum_{ i=1 }^{L} P_i \\
& \textrm{subject to}
& &  R_k \geq \xi_k,\; \forall k \\
& &&  P_{\mathrm{trans},i} \leq P_{\mathrm{max}, i }, \; \forall i,\\
\end{aligned}
\end{equation}
where $\xi_k$ is the target QoS at user $k$. Plugging \eqref{eq: Sum-Rate-k}, \eqref{eq: Transmit-Power}, and \eqref{eq: Power-Consumption} into \eqref{Optimization: General-Form}, the optimization problem is converted to
\begin{equation} \label{Optimization: General-Form1}
\begin{aligned}
& \underset{ \{ \rho_{i,t} \geq 0 \} }{\textrm{minimize}}
& & \sum_{i =1 }^{L} \Delta_i \sum_{t=1}^{K} \rho_{i,t} \\
& \mbox{subject to}
& &  \mathrm{SINR}_{k}  \geq \hat{\xi}_k \;, \forall k\\
& &&  P_{\mathrm{trans},i} \leq P_{ \mathrm{max},i } \;, \forall i, \\
\end{aligned}
\end{equation}
where $\hat{\xi}_k = 2^{ \frac{\xi_k \tau_c }{ \gamma{\mathrm{DL}} (\tau_c - \tau_p)} } -1$ implies that the QoS targets are transformed into SINR targets. Owing to the universality of $\{\mathrm{SINR}_k\}$, \eqref{Optimization: General-Form1} is a general formulation for any selection of precoding scheme. We focus on MRT and ZF precoding since we have derived closed-form expressions for the corresponding SINRs. In these cases, the exact problem formulations are provided in Lemmas \ref{Optimization: MRT} and \ref{Optimization: ZF}.
\begin{lemma} \label{Optimization: MRT}
If the system utilizes MRT precoding, then the power minimization problem in \eqref{Optimization: General-Form1} is expressed as
\begin{equation} 
\begin{aligned}
& \underset{ \{ \rho_{i,t} \geq 0 \}}{\textrm{minimize}} &&
 \sum_{i=1}^{L} \Delta_i \sum\limits_{t=1}^{K} \rho_{i,t} \\
&\mbox{subject to} && \frac{ M \sum_{i =1}^{L} \rho_{i,k} \theta_{i,k} }{ M \sum\limits_{i=1}^{L} \sum\limits_{t \in \mathcal{P}_k \setminus \{k\} }  \rho_{i,t} \theta_{i,k}  +  \sum\limits_{i = 1 }^{L} \sum\limits_{ t=1 }^{K} \rho_{i,t} \beta_{i,k} + \sigma_{ \mathrm{DL}}^2 }  \\
& && \geq \hat{ \xi }_k, \; \forall k  \\
& && \sum_{t=1}^{K} \rho_{i,t} \leq P_{\mathrm{max},i}, \; \forall i. \\
\end{aligned}
\end{equation}
\end{lemma}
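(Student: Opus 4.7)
The plan is to obtain the claimed formulation by direct substitution, so the proof is essentially a bookkeeping exercise rather than a new derivation. I would start from the general formulation \eqref{Optimization: General-Form1}, which already has the correct objective and the per-BS transmit-power constraint $P_{\mathrm{trans},i}\leq P_{\mathrm{max},i}$. Using \eqref{eq: Transmit-Power}, the transmit power $P_{\mathrm{trans},i}=\sum_{t=1}^{K}\rho_{i,t}$ since $\mathbb{E}\{\|\mathbf{w}_{i,t}\|^2\}=1$ for MRT by construction (see \eqref{eq: Linear-Precoding-Vector}), which gives the linear objective $\sum_{i=1}^{L}\Delta_i\sum_{t=1}^{K}\rho_{i,t}$ and the linear power constraint $\sum_{t=1}^{K}\rho_{i,t}\leq P_{\mathrm{max},i}$ exactly as in the claim.

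Next, I would replace the generic constraint $\mathrm{SINR}_k\geq \hat{\xi}_k$ in \eqref{Optimization: General-Form1} with the closed-form MRT expression. By Corollary \ref{Corollary-MRT-Rate}, under Rayleigh fading and MRT precoding the effective SINR in Theorem \ref{Theorem-Lower-Bound-Rate} reduces to $\mathrm{SINR}_k^{\mathrm{MRT}}$ given in \eqref{eq: SINR-MRT}. Plugging this into the SINR constraint directly yields the constraint displayed in the lemma, completing the rewrite.

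It is worth noting explicitly that the transformation from the SE target $R_k\geq\xi_k$ to the SINR target $\mathrm{SINR}_k\geq\hat{\xi}_k$ with $\hat{\xi}_k=2^{\xi_k\tau_c/[\gamma^{\mathrm{DL}}(\tau_c-\tau_p)]}-1$, already recorded in the text preceding the lemma, relies on the fact that $\log_2(1+x)$ is strictly increasing in $x$, so the two constraints are equivalent and the feasible set is preserved under the substitution.

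There is essentially no obstacle: once Corollary \ref{Corollary-MRT-Rate} is available, the lemma is a restatement of \eqref{Optimization: General-Form1}. The only thing to be careful about is to verify that the normalization used in \eqref{eq: Linear-Precoding-Vector} makes $\mathbb{E}\{\|\mathbf{w}_{i,t}\|^2\}=1$, so that $\rho_{i,t}$ is indeed the transmit power per user and the per-BS power constraint takes the simple additive form. Given that, the lemma follows immediately.
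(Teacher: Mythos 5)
Your proposal is correct and follows exactly the route the paper intends: the lemma is obtained by substituting the closed-form MRT SINR of Corollary~\ref{Corollary-MRT-Rate} into the generic constraint of \eqref{Optimization: General-Form1}, using $P_{\mathrm{trans},i}=\sum_{t=1}^{K}\rho_{i,t}$ from \eqref{eq: Transmit-Power} (valid since $\mathbb{E}\{\|\mathbf{w}_{i,t}\|^2\}=1$) and the monotone SE-to-SINR target conversion. The paper gives no separate proof precisely because the statement is this immediate rewrite, so your bookkeeping argument matches it.
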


\begin{lemma} \label{Optimization: ZF}
If the system utilizes ZF precoding, then the power minimization problem in \eqref{Optimization: General-Form1} is expressed as
\begin{equation}
\begin{aligned}
& \underset{ \{ \rho_{i,t} \geq 0 \} }{\textrm{minimize}} && \sum_{i=1}^{L} \Delta_i \sum_{t=1}^{K} \rho_{i,t}   \\
&  \textrm{subject to} &&  \frac{ G \sum\limits_{i =1}^{L} \rho_{i,k} \theta_{i,k} }{ G \sum\limits_{i =1}^{L} \sum\limits_{\substack{t \in \\ \mathcal{P}_k \setminus \{k\} }}  \rho_{i,t} \theta_{i,k} + \sum\limits_{ i =1 }^{L} \sum\limits_{ t=1 }^{K} \rho_{i,t} \left( \beta_{i,k} - \theta_{i,k} \right)+ \sigma_{ \mathrm{DL}}^2 } \\
& && \geq \hat{ \xi }_k, \forall k \\
& && \sum_{t=1}^{K} \rho_{i,t} \leq P_{\mathrm{max},i}, \forall i,\\
\end{aligned}
\end{equation}
where $G= M -K$.
\end{lemma}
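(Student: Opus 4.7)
The plan is to obtain Lemma \ref{Optimization: ZF} by a direct substitution of the closed-form ZF SINR expression from Corollary \ref{Corollary-ZF-Rate} into the generic power minimization problem \eqref{Optimization: General-Form1}. No new optimization-theoretic content is needed: the statement is really a specialization of \eqref{Optimization: General-Form1} to ZF precoding, made explicit so that the next section can attack it in closed form. The work is therefore purely algebraic bookkeeping, with three elementary pieces to verify.

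First I would rewrite the objective $\sum_i P_i$ using \eqref{eq: Power-Consumption} and \eqref{eq: Transmit-Power}, which collapses to $\sum_{i=1}^{L} \Delta_i \sum_{t=1}^{K} \rho_{i,t}$, matching the objective already displayed in \eqref{Optimization: General-Form1}. Here I would note that the expectation in \eqref{eq: Transmit-Power} reduces to $\sum_t \rho_{i,t}$ because the ZF precoders in \eqref{eq: Linear-Precoding-Vector} are normalized so that $\mathbb{E}\{\|\mathbf{w}_{i,t}\|^2\}=1$ by construction, which is also what makes the per-BS transmit-power constraint take the simple form $\sum_{t=1}^{K}\rho_{i,t} \leq P_{\mathrm{max},i}$.

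Next I would handle the SE-to-SINR conversion. Starting from $R_k \geq \xi_k$ and using the expression for $R_k$ in \eqref{eq: Sum-Rate-k}, monotonicity of $\log_2(1+\cdot)$ together with $\gamma^{\mathrm{DL}}(\tau_c-\tau_p) > 0$ yields the equivalent constraint $\mathrm{SINR}_k \geq \hat{\xi}_k$ with $\hat{\xi}_k = 2^{\xi_k \tau_c/(\gamma^{\mathrm{DL}}(\tau_c-\tau_p))}-1$, as already recorded in \eqref{Optimization: General-Form1}. Finally I would plug in the ZF-specific formula for $\mathrm{SINR}_k^{\mathrm{ZF}}$ from \eqref{eq: SINR-ZF} in place of the generic $\mathrm{SINR}_k$, and introduce the shorthand $G = M-K$ to compress the display.

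There is no genuine obstacle to overcome; the only subtlety worth flagging is that the $\mathrm{SINR}_k \geq \hat{\xi}_k$ reformulation is an equivalence (not a relaxation) precisely because $\log_2(1+\cdot)$ is strictly increasing, so the feasible set and the optimal value of \eqref{Optimization: General-Form} are unchanged. Once all three substitutions are in place, the statement of Lemma \ref{Optimization: ZF} is read off verbatim, which completes the argument.
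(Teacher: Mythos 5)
Your proposal is correct and follows essentially the same route as the paper, which obtains Lemma \ref{Optimization: ZF} simply by plugging \eqref{eq: Sum-Rate-k}, \eqref{eq: Transmit-Power}, and \eqref{eq: Power-Consumption} into \eqref{Optimization: General-Form} to get \eqref{Optimization: General-Form1}, and then specializing the generic $\mathrm{SINR}_k$ to the closed-form ZF expression \eqref{eq: SINR-ZF} from Corollary \ref{Corollary-ZF-Rate} with $G=M-K$. The points you flag — the unit-norm precoder normalization giving $P_{\mathrm{trans},i}=\sum_t \rho_{i,t}$ and the strict monotonicity of $\log_2(1+\cdot)$ making the SE-to-SINR rewriting an equivalence — are exactly the ingredients the paper relies on.
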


The optimal power allocation and user association are obtained by solving these problems. At the optimal solution, each user $t$ in the network is associated with the subset of BSs that is determined by the non-zero values $\rho_{i,t}, \forall i, t$. The BS-user association problem is thus solved implicitly. There are fundamental differences between our problem formulation and the previous ones that appeared in \cite{Li2015,Stridh2006a, Rashid1998b} for conventional MIMO systems with a few antennas at the BSs. The main distinction is that these previous works consider short-term QoS constraints that depend on the current fading realizations, while we consider long-term QoS constraints that do not depend on instantaneous fading realizations thanks to channel hardening and favorable properties in Massive MIMO. In addition, our proposed approach is more practically appealing since the power allocation and BS-user association can be solved over a longer time and frequency horizons and since we do not try to combat small-scale and frequency-selective fading by the power control.
\vspace*{-0.5cm}
\section{Optimal Power Allocation and User Association by linear programming} \label{Section:Optimal-Solution}
This section provides a unified mechanism to obtain the optimal solution to the total power minimization problem for both MRT and ZF precoding. The BS-user association principle is also discussed by utilizing Lagrange duality theory. 
\vspace*{-1cm}
\subsection{Optimal Solution with Linear Programming}
We now show how to obtain optimal solutions for the problems stated in Lemmas \ref{Optimization: MRT} and  \ref{Optimization: ZF}.  Let us denote the power control vector of an arbitrary user $t$ by $\pmb{\rho}_t = [\rho_{1, t}, \ldots, \rho_{L,t}]^T \in \mathbb{C}^{L}$, where its entries satisfy $\rho_{i,t} \geq 0$ meaning that $ \pmb{\rho}_t  \succeq 0$. We also denote $\pmb{\Delta} = [\Delta_1 \ldots \Delta_L]^T \in \mathbb{C}^{L}$ and $\pmb{\epsilon}_i \in \mathbb{C}^L$ has all zero entries but the $i$th one is 1. The optimal power allocation is obtained by the following theorem.

\begin{theorem} \label{Theorem: Linear-Solution}
The optimal solution to the total transmit power minimization problem in \eqref{Optimization: General-Form1} for MRT or ZF precoding is obtained by solving the linear program
\begin{equation} \label{eq: Linear-Solution-CVX}
\begin{aligned}
& \underset{ \{ \pmb{\rho}_t \succeq 0 \}  }{\textrm{minimize}}
& & \sum_{t=1}^{K} \pmb{\Delta}^T \pmb{\rho}_t \\
& \textrm{subject to}
& &   \sum_{t \in \mathcal{P}_k \setminus \{k \} } \pmb{\theta}_k^T \pmb{\rho}_t + \sum_{t=1}^{K} \mathbf{c}_k^T \pmb{\rho}_t -  \mathbf{b}_k^T \pmb{\rho}_k + \sigma_{ \mathrm{DL} }^2 \leq 0, \; \forall k  \\
& && \sum_{t=1}^{K} \pmb{\epsilon}_i^T \pmb{\rho}_{t} \leq P_{\mathrm{max},i}, \; \forall i.\\
\end{aligned}
\end{equation}
Here, the vectors $\pmb{\theta}_{k}, \mathbf{c}_{k}, $ and $\mathbf{b}_{k} $ depend on the precoding scheme. MRT precoding gives
\begin{equation*}
\begin{split}
\pmb{\theta}_k &=\left[ M \theta_{1,k}, \ldots, M \theta_{L,k} \right]^T \\
\mathbf{c}_k &= \left[ \beta_{1,k}, \ldots, \beta_{L,k}  \right]^T,\\
\mathbf{b}_k &= \left[ M \theta_{1,k} / \hat{\xi}_k, \ldots, M \theta_{L,k} / \hat{\xi}_k \right]^T,
\end{split}
\end{equation*}
while ZF precoding obtains
\begin{equation*}
\begin{split}
\pmb{\theta}_k &=\left[ (M -K) \theta_{1,k}, \ldots, (M -K) \theta_{L,k} \right]^T \\
\mathbf{c}_k &= \left[ \beta_{1,k} - \theta_{1,k}, \ldots, \beta_{L,k} - \theta_{L,k} \right]^T,\\
\mathbf{b}_k &= \left[ (M -K ) \theta_{1,k} / \hat{\xi}_k , \ldots, (M -K )  \theta_{L,k} / \hat{\xi}_k \right]^T.
\end{split}
\end{equation*} 
\end{theorem}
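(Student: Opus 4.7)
The plan is to show that each constituent of the two problems in Lemma~\ref{Optimization: MRT} and Lemma~\ref{Optimization: ZF}—the objective, the SINR constraints, and the BS power budget constraints—is linear (or can be made linear) in the decision variables $\{\rho_{i,t}\}$, and then to rewrite everything compactly in terms of the vectors $\pmb{\rho}_t$, $\pmb{\Delta}$, $\pmb{\epsilon}_i$, $\pmb{\theta}_k$, $\mathbf{c}_k$, $\mathbf{b}_k$.

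First I would treat the objective: $\sum_i \Delta_i \sum_t \rho_{i,t}$ is plainly linear in the $\rho_{i,t}$'s, and stacking the powers of a given user $t$ across BSs into $\pmb{\rho}_t$ gives the equivalent form $\sum_{t=1}^K \pmb{\Delta}^T\pmb{\rho}_t$. Similarly, the per-BS power constraint $\sum_t \rho_{i,t} \leq P_{\mathrm{max},i}$ extracts the $i$th entry of each $\pmb{\rho}_t$, which is exactly $\pmb{\epsilon}_i^T \pmb{\rho}_t$; summing over $t$ gives the second constraint in \eqref{eq: Linear-Solution-CVX}.

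The key step is handling the SINR constraints. For MRT, the denominator of \eqref{eq: SINR-MRT} is strictly positive for any feasible $\{\rho_{i,t}\geq 0\}$ because of the noise term $\sigma_{\mathrm{DL}}^2>0$, so $\mathrm{SINR}_k^{\mathrm{MRT}} \geq \hat{\xi}_k$ is equivalent to
\begin{equation*}
M\!\!\sum_{i=1}^{L}\sum_{t\in\mathcal{P}_k\setminus\{k\}}\!\!\rho_{i,t}\theta_{i,k} + \sum_{i=1}^{L}\sum_{t=1}^{K}\rho_{i,t}\beta_{i,k} - \frac{M}{\hat{\xi}_k}\sum_{i=1}^{L}\rho_{i,k}\theta_{i,k} + \sigma_{\mathrm{DL}}^2 \leq 0,
\end{equation*}
which is a linear inequality in $\{\rho_{i,t}\}$. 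Collecting the per-BS coefficients, the first sum becomes $\sum_{t\in\mathcal{P}_k\setminus\{k\}}\pmb{\theta}_k^T\pmb{\rho}_t$ with $\pmb{\theta}_k=[M\theta_{1,k},\ldots,M\theta_{L,k}]^T$, the second sum becomes $\sum_t \mathbf{c}_k^T\pmb{\rho}_t$ with $\mathbf{c}_k=[\beta_{1,k},\ldots,\beta_{L,k}]^T$, and the third becomes $\mathbf{b}_k^T\pmb{\rho}_k$ with $\mathbf{b}_k=[M\theta_{1,k}/\hat{\xi}_k,\ldots,M\theta_{L,k}/\hat{\xi}_k]^T$. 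This reproduces the first constraint of \eqref{eq: Linear-Solution-CVX} for MRT. Repeating the identical manipulation for ZF with \eqref{eq: SINR-ZF}—only the factor $M$ is replaced by $M-K$ and $\beta_{i,k}$ by $\beta_{i,k}-\theta_{i,k}$ in the ``all-user interference'' term—delivers the corresponding ZF definitions of $\pmb{\theta}_k$, $\mathbf{c}_k$, $\mathbf{b}_k$ stated in the theorem.

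Since objective, SINR constraints, power budget constraints, and the nonnegativity $\pmb{\rho}_t\succeq 0$ are all linear in the unknowns, the problem is a linear program. I do not foresee a real obstacle: the only thing worth pausing on is that multiplying the SINR constraint through by its denominator is an equivalence precisely because the denominator is guaranteed positive by the additive noise term; without that observation the direction of the inequality could be lost. Once this is noted, the rest is a matter of grouping coefficients into the vectors above.
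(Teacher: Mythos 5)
Your proposal is correct and follows essentially the same route as the paper's (very terse) proof: rewrite the SINR constraints from Lemmas~\ref{Optimization: MRT} and \ref{Optimization: ZF} as affine inequalities by clearing the positive denominator, group coefficients into $\pmb{\theta}_k$, $\mathbf{c}_k$, $\mathbf{b}_k$, and observe that the objective and all constraints are linear in $\{\pmb{\rho}_t\}$. Your explicit remark that the strictly positive noise term $\sigma_{\mathrm{DL}}^2$ guarantees the equivalence when multiplying through is a useful detail the paper leaves implicit.
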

\begin{proof}
The problem in \eqref{eq: Linear-Solution-CVX} is obtained from Lemmas~\ref{Optimization: MRT} and ~\ref{Optimization: ZF} after some algebra. We note that the objective function is a linear combination of $\pmb{\rho}_t$, for $t=1, \ldots, K$. Moreover, the constraint functions are affine functions of power variables. Thus the optimization problem \eqref{eq: Linear-Solution-CVX} is a linear program.
\end{proof}

The merits of Theorem \ref{Theorem: Linear-Solution} are twofold:  It indicates that the total transmit power minimization problem for a multi-cell Massive MIMO system with non-coherent joint transmission is linear and thus can be solved to global optimality in polynomial time, for example, using general-purpose implementations of interior-point methods such as CVX \cite{cvx2015}. \footnote{The linear program in \eqref{eq: Linear-Solution-CVX} is only obtained for non-coherent joint transmission. For the corresponding system that deploys another CoMP technique called coherent joint transmission, the total transmit power optimization with Rayleigh fading channels and MRT or ZF precoding is a second-order cone program (see Appendix \ref{Appendix: Coherent Joint Transmission}). This problem is considered in Section \ref{Section:Numerical-Results} for comparison reasons.} In addition, the solution provides the optimal BS-user association in the system. We further study it via Lagrange duality theory in the next subsection.

\subsection{BS-User Association Principle}
To shed light on the optimal BS-user association provided by the solution in Theorem \ref{Theorem: Linear-Solution}, we analyze the problem utilizing Lagrange duality theory. The Lagrangian of \eqref{eq: Linear-Solution-CVX} is
\begin{equation} \label{eq: Langrangian}
\begin{split}
& \mathcal{L} ( \pmb{ \rho}_t, \lambda_k, \mu_i ) =  \sum_{ t= 1}^{K} \pmb{\Delta}^T \pmb{\rho}_t   \\
& \; \; \; + \sum_{k=1}^{K} \lambda_k \left( \sum_{t \in \mathcal{P}_k \setminus \{k \} } \pmb{\theta}_k^T \pmb{\rho}_t + \sum_{t=1}^{K} \mathbf{c}_k^T \pmb{\rho}_t - \mathbf{b}_k^T \pmb{\rho}_k + \sigma_{\mathrm{DL}}^2 \right)  \\
&\; \; \;+ \sum_{i=1}^{L}\mu_i \left( \sum_{t=1}^{K} \pmb{\epsilon}_i^T \pmb{\rho}_t - P_{\mathrm{max},i} \right),
\end{split}
\end{equation}
where the non-negative Lagrange multipliers $\lambda_k$ and $\mu_i$ are associated with the $k$th QoS constraint and the transmit power constraint at BS $i$, respectively. The corresponding Lagrange dual function of \eqref{eq: Langrangian} is formulated as
\begin{equation} \label{eq: Langrangian-Duality}
\begin{split}
& \mathcal{G}\left(\lambda_k, \mu_i \right) = \underset{  \{ \pmb{\rho}_t \} }{ \inf} \; \mathcal{L} \left( \pmb{\rho}_t, \lambda_k, \mu_i \right) \\
&= \sum_{k=1}^{K} \lambda_k \sigma_{ \mathrm{DL} }^2 - \sum_{i=1}^{L} \mu_i P_{\mathrm{max},i} + \underset{ \{ \pmb{\rho}_t \} }{ \inf } \; \sum_{t=1}^{K} \mathbf{a}_t^T \pmb{\rho}_t,
\end{split}
\end{equation}
where $\mathbf{a}_t^T = \pmb{\Delta}^T + \sum_{k=1}^K \lambda_k \pmb{\theta}_k^T \mathbbm{1}_k(t)+ \sum_{k=1}^{K} \lambda_k \mathbf{c}_k^T - \lambda_t \mathbf{b}_t^T + \sum_{i=1}^{L} \mu_i \pmb{ \epsilon }_i^T $ and  the indicator function $\mathbbm{1}_k (t)$ is defined as
\begin{equation}
\mathbbm{1}_k (t) = \begin{cases}  
 0, &  t \notin \mathcal{P}_k \setminus \{k\} , \\
1 , & t \in \mathcal{P}_k \setminus \{k\}. 
\end{cases}
\end{equation}
It is straightforward to show that $\mathcal{G}\left(\lambda_k, \mu_i \right)$ is bounded from below (i.e, $\mathcal{G}\left(\lambda_k, \mu_i \right) \neq - \infty$) if and only if $\mathbf{a}_t \succeq 0$, for $t = 1, \ldots, K$. Therefore, the Lagrange dual problem to \eqref{eq: Linear-Solution-CVX} is
\begin{equation} \label{Optimization: Dual-Problem}
\begin{aligned}
& \underset{ \{  \lambda_k , \mu_i \} }{\textrm{maximize}}
& & \sum_{k=1}^{K} \lambda_k \sigma_{ \mathrm{DL} }^2 - \sum_{i=1}^{L} \mu_i P_{\mathrm{max},i} \\
& \textrm{subject to}
& & \mathbf{a}_t \succeq 0, \; \forall t. \\
\end{aligned}
\end{equation}
From this dual problem, we obtain the following main result that gives the set of BSs serving an 
arbitrary user $t$. 
\begin{theorem} \label{Theorem-BS-Association}
Let $\{ \check{\lambda}_k, \check{\mu}_i \}$ denote the optimal Lagrange multipliers. User $t$ is served only by the subset of BSs with indices in the set $\mathcal{S}_t $ defined as
\begin{equation} \label{eq: BS-Association}
\underset{i}{ \mathrm{argmin}} \left( \Delta_i  + \sum\limits_{k=1}^K \check{\lambda}_k \theta_{i,k} \mathbbm{1}_k (t) +  \sum\limits_{k=1}^{K} \check{\lambda}_{k} c_{i,k} + \sum\limits_{i=1}^{L} \check{\mu}_{i} \right)\frac{ 1 }{b_{i,t}},
\end{equation}
where the parameters $c_{i,k}$ and $b_{i,t}$ are selected by the linear precoding scheme:
\vspace{0.2cm}
\begin{center}
\begin{tabular}{| c|  c |c |}
\hline
 Precoding scheme & $c_{i,k}$ & $b_{i,t}$ \\ 
 \hline
 MRT & $\beta_{i,k}$ & $ M \theta_{i,t} / \hat{\xi}_t  $ \\  
 \hline
 ZF & $ \beta_{i,k} - \theta_{i,k} $ & $ (M-K) \theta_{i,t} / \hat{\xi}_t  $ \\  
 \hline  
\end{tabular}
\end{center}
\vspace{0.2cm}
The optimal BS association for user $t$ is further specified as one of the following two cases:
\begin{itemize}
\item It is served by one BS if the set $\mathcal{S}_t$ in \eqref{eq: BS-Association} only contains one index.
\item It is served by multiple of BSs if the set $\mathcal{S}_t$ in \eqref{eq: BS-Association} contains several indices.
\end{itemize}
\end{theorem}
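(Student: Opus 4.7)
The plan is to derive \eqref{eq: BS-Association} from the Karush--Kuhn--Tucker (KKT) conditions of the linear program in \eqref{eq: Linear-Solution-CVX}. Since that problem is a finite-dimensional LP which is assumed feasible (so that the SE targets are achievable under the per-BS power budgets) and whose objective is bounded below by zero, LP strong duality applies: the primal optimum $\{\pmb{\rho}_t^*\}$ and the dual optimum $\{\check{\lambda}_k,\check{\mu}_i\}$ exist and jointly satisfy complementary slackness. This is the only machinery I need; the result is then essentially a reading of the dual constraint.

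First, I would examine the inner infimum in \eqref{eq: Langrangian-Duality}. Because each $\pmb{\rho}_t$ ranges over the non-negative orthant, the linear map $\pmb{\rho}_t\mapsto \mathbf{a}_t^T\pmb{\rho}_t$ is bounded below (with infimum equal to zero) if and only if $\mathbf{a}_t\succeq 0$; otherwise the Lagrange dual is $-\infty$. This reproduces the dual feasibility constraint in \eqref{Optimization: Dual-Problem}. Writing this component-wise at the dual optimum and rearranging yields, for every BS index $i$ and user index $t$,
\[
\Delta_i+\sum_{k=1}^{K}\check{\lambda}_k\,\theta_{i,k}\,\mathbbm{1}_k(t)+\sum_{k=1}^{K}\check{\lambda}_k\,c_{i,k}+\check{\mu}_i \;\geq\; \check{\lambda}_t\,b_{i,t},
\]
with the entries $\theta_{i,k},\,c_{i,k},\,b_{i,t}$ inherited from the MRT or ZF specification in Theorem \ref{Theorem: Linear-Solution}. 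Dividing both sides by the strictly positive $b_{i,t}$ shows that $\check{\lambda}_t$ is a lower bound on the scalar expression appearing inside the argmin of \eqref{eq: BS-Association}, with equality iff the $i$-th entry of $\mathbf{a}_t^*$ vanishes, i.e.\ iff $i\in\mathcal{S}_t$.

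Next I would apply complementary slackness to the primal non-negativity constraint $\rho_{i,t}\geq 0$. Its dual slack is precisely $(\mathbf{a}_t^*)_i$, so $\rho_{i,t}^*\,(\mathbf{a}_t^*)_i=0$. Therefore $\rho_{i,t}^*>0$ forces $(\mathbf{a}_t^*)_i=0$, which by the previous step forces $i\in\mathcal{S}_t$. This is the claim that at the optimum user $t$ is served \emph{only} by BSs in $\mathcal{S}_t$. The two bullets then follow trivially according to whether the argmin is a singleton or attained by several indices; in the latter case, the split of $\pmb{\rho}_t^*$ across the tied BSs is pinned down by the active QoS and power-budget constraints. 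Reading off $c_{i,k}$ and $b_{i,t}$ from the MRT and ZF rows of Theorem \ref{Theorem: Linear-Solution} supplies the table.

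The main obstacle I foresee is purely bookkeeping: tracking how $\mathbbm{1}_k(t)$ couples user $t$ through every other user $k$ that shares its pilot, correctly identifying $-\lambda_t\mathbf{b}_t^T$ as the term that produces the ``$-$'' sign transferred to the right-hand side of the dual inequality, and confirming that dividing by $b_{i,t}>0$ legitimately converts the componentwise inequality $(\mathbf{a}_t^*)_i\geq 0$ into the argmin statement. No analysis of the primal KKT stationarity in the classical gradient sense is required, since for an LP on a polyhedral feasible set the complementary-slackness pairing above is already the full KKT content.
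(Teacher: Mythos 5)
Your argument is correct and reaches the paper's conclusion, but it gets to the key pairing by a more standard route than the paper does. The paper does not invoke LP complementary slackness directly: it substitutes $u_{i,t}=\sqrt{\rho_{i,t}}$, rewrites the Lagrangian as the quadratic form $\sum_t \mathbf{u}_t^T\mathbf{A}_t\mathbf{u}_t$ plus constants, argues boundedness of the dual via $\mathbf{A}_t\succeq 0$, and then sets the gradient $2\check{\mathbf{A}}_t\check{\mathbf{u}}_t=\mathbf{0}$ to obtain the componentwise conditions $\sqrt{\rho_{i,t}}\,(\mathbf{a}_t)_i=0$ in \eqref{eq: First-Derivative1}; from there it concludes, exactly as you do, that $\rho_{i,t}>0$ forces equality $\check{\lambda}_t=(\cdot)/b_{i,t}$ while dual feasibility gives $\check{\lambda}_t\le(\cdot)/b_{i,t}$ for every $i$, so served BSs lie in the argmin. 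Your version replaces the square-root change of variable and the stationarity computation with LP strong duality and the complementary-slackness product $\rho_{i,t}^*(\mathbf{a}_t^*)_i=0$, which is cleaner and avoids the slightly informal detour through a non-linear reparametrization; the paper's route buys nothing extra here beyond re-deriving the same pairing. Two small points: like the paper, you should state explicitly that the QoS constraint forces at least one $\rho_{i,t}^*>0$, which is what guarantees the minimum value of $(\cdot)/b_{i,t}$ actually equals $\check{\lambda}_t$ so that the zero entries of $\mathbf{a}_t^*$ coincide with the argmin set; and your componentwise dual constraint carries $\check{\mu}_i$ where the paper's displayed formulas \eqref{eq: BS-Association} and \eqref{eq: Max-Lambda} write $\sum_{i=1}^{L}\check{\mu}_i$ --- your reading is the correct component of $\sum_{i'}\mu_{i'}\pmb{\epsilon}_{i'}^T$ (the paper has an index collision there), so this is a silent correction rather than an error on your part.
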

\begin{proof}
The proof is given in Appendix \ref{Appendix BS-Association}.
\end{proof}
 
The expression in \eqref{eq: BS-Association} explicitly shows that the optimal BS-user association is affected by many factors such as interference between BSs, noise intensity level, power allocation, large-scale fading, channel estimation quality, pilot contamination, and QoS constraints. There is no simple user association rule since the function depends on the Lagrange multipliers, but we can be sure that max-SNR association is not always optimal. We will later show numerically that for Rayleigh fading channels and MRT or ZF, each user is usually served by only one BS at the optimal point.

\section{Max-min QoS Optimization} \label{Section:Max-Min-QoS}

This section is inspired by the fact that there is not always a feasible solution to the power minimization problem with fixed QoS constraints in \eqref{eq: Linear-Solution-CVX}. The reason is the trade-off between the target QoS constraints and the propagation environments. The path loss is one critical factor, while limited power for pilot sequences leads to that channel estimation error always exists. Thus, for a certain network, it is not easy to select the target QoS values. In order to find appropriate QoS targets, we consider a method to optimize the QoS constraints along with the power allocation.

Fairness is an important consideration when designing wireless communication systems to provide uniformly great service for everyone \cite{Yang2014a}. The vision is to provide a good target QoS to all users by maximizing the lowest QoS value, possibly with some user specific weighting. For this purpose, we consider the optimization problem
\begin{equation} \label{Problem: Max-Min-QoS}
\begin{aligned}
& \underset{ \{ \rho_{i,t} \geq 0 \} }{\textrm{maximize}} \; \underset{k}{\textrm{min}}
& &  R_k/w_k \\
& \textrm{subject to}
& & P_{ \mathrm{trans},i} \leq P_{\mathrm{max},i} \;, \forall i  ,\\
\end{aligned}
\end{equation}
where $w_k > 0$ is the weight for user $k$. The weights can be assigned based on for example information about the propagation, interference situation or user priorities. If there is no such explicit priorities, they may be set to $1$. To solve \eqref{Problem: Max-Min-QoS}, it is converted to the epigraph form \cite{Boyd2004a}
\begin{equation} \label{eq: WeightSpecific2}
\begin{aligned}
& \underset{ \{ \rho_{i,t} \geq 0 \}, \xi}{\textrm{maximize}}
& & \xi \\
& \textrm{subject to}
& & R_{k}/ w_k \geq \xi \;, \forall k \\
& & & P_{ \mathrm{trans},i} \leq P_{\mathrm{max},i} \;, \forall i,\\
\end{aligned}
\end{equation}
where $\xi$ is the minimum QoS parameter for the users that we aim to maximize. Plugging \eqref{eq: Sum-Rate-k} and  \eqref{eq: Transmit-Power} into \eqref{eq: WeightSpecific2}, we obtain

\begin{equation} \label{eq: WeightSpecific3}
\begin{aligned}
& \underset{ \{ \rho_{i,t} \geq 0 \}, \xi}{\textrm{maximize}}
& & \xi \\
& \textrm{subject to}
& &  \mathrm{SINR}_k \geq 2^{ \xi w_k /( \gamma^{ \mathrm{DL} } \left(1 -  \tau_p/\tau_c  \right) ) } -1  \;, \forall k \\
& & & \sum_{t=1}^{K} \rho_{i,t} \leq P_{\mathrm{max},i}, \; \forall i.\\
\end{aligned}
\end{equation}
We can solve \eqref{eq: WeightSpecific3} for a fixed $\xi$ as a linear program, using Theorem \ref{Theorem: Linear-Solution} with $\xi_k = \xi w_k$. Since the QoS constraints are increasing functions of $\xi$, the solution to the max-min QoS optimization problem is obtained by doing a line search over $\xi$ to get the maximal feasible value. Hence, this is a quasi-linear program. As a result, we further apply Lemma 2.9 and Theorem 2.10 in \cite{Bjornson2013d} to obtain the solution as follows.
\begin{theorem} \label{Theorem-Bisection}
The optimum to \eqref{eq: WeightSpecific3} is obtained by checking the feasibility of \eqref{eq: Linear-Solution-CVX} over an SE search range $\mathcal{R} = [0 , \xi_0^{\mathrm{upper}} ]$, where $\xi_0^{\mathrm{upper}}$ is selected to make \eqref{eq: Linear-Solution-CVX} infeasible.
\end{theorem}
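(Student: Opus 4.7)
The plan is to verify that problem \eqref{eq: WeightSpecific3} is a monotone one-dimensional search once $\xi$ is fixed, so that a bisection argument on the outer variable $\xi$, combined with the linear program of Theorem \ref{Theorem: Linear-Solution} on the inner problem, returns the global optimum. I would proceed in three short steps.

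First, I would reduce the inner feasibility problem to Theorem \ref{Theorem: Linear-Solution}. For any fixed $\xi \geq 0$, setting $\xi_k := \xi w_k$ transforms the $k$-th rate constraint in \eqref{eq: WeightSpecific3} into the SINR constraint $\mathrm{SINR}_k \geq \hat{\xi}_k(\xi) := 2^{\xi w_k / (\gamma^{\mathrm{DL}}(1-\tau_p/\tau_c))} - 1$. Substituting the closed-form MRT or ZF SINR into these constraints reproduces exactly the affine constraints of \eqref{eq: Linear-Solution-CVX} (with the vectors $\mathbf{b}_k$ rescaled by $\hat{\xi}_k(\xi)$). Hence, for fixed $\xi$, checking whether \eqref{eq: WeightSpecific3} is feasible reduces to solving the linear program \eqref{eq: Linear-Solution-CVX}, whose feasibility can be decided in polynomial time.

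Second, I would establish monotonicity of feasibility in $\xi$. The map $\xi \mapsto \hat{\xi}_k(\xi)$ is strictly increasing for each $k$, so raising $\xi$ tightens every QoS constraint while leaving the power budgets $P_{\mathrm{trans},i} \leq P_{\mathrm{max},i}$ unchanged. Consequently, if some $\{\rho_{i,t}\}$ is feasible at $\xi$, it is feasible at every $\xi' \leq \xi$; the feasibility indicator $\mathcal{F}(\xi) \in \{0,1\}$ of \eqref{eq: Linear-Solution-CVX} is therefore non-increasing in $\xi$. At $\xi = 0$ we have $\hat{\xi}_k(0) = 0$ and the all-zero power allocation trivially satisfies every constraint, so $\mathcal{F}(0) = 1$. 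By hypothesis $\mathcal{F}(\xi_0^{\mathrm{upper}}) = 0$, so there exists a unique threshold $\xi^\star \in [0, \xi_0^{\mathrm{upper}}]$ separating feasibility from infeasibility, and this threshold is exactly the optimum of \eqref{Problem: Max-Min-QoS}.

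Third, I would invoke Lemma 2.9 and Theorem 2.10 of \cite{Bjornson2013d}, which formalize exactly this monotone-feasibility structure and conclude that bisection on $\xi$ over the interval $[0,\xi_0^{\mathrm{upper}}]$, with each trial decided by solving the linear program \eqref{eq: Linear-Solution-CVX}, converges to $\xi^\star$ within any prescribed accuracy in $\lceil \log_2(\xi_0^{\mathrm{upper}}/\epsilon)\rceil$ iterations. The optimal power allocation $\{\rho_{i,t}^\star\}$ is then recovered from the linear program solved at the final feasible trial point.

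The main obstacle I expect is not the convergence argument, which is standard once monotonicity is established, but rather the practical side of ensuring that $\xi_0^{\mathrm{upper}}$ is actually infeasible; this is handled by exhibiting any one explicit $\xi$ whose aggregate SINR demands exceed what the per-BS power budgets can deliver, for instance by using $\xi_0^{\mathrm{upper}}$ so large that even ignoring interference (treating the pilot-contamination-free SINR ceilings from Remark \ref{Remark1}) the required powers violate $\sum_{t} \rho_{i,t} \leq P_{\mathrm{max},i}$. With that bound in hand, the bisection procedure of Theorem \ref{Theorem-Bisection} is justified.
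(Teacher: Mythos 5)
Your proposal is correct and follows essentially the same route as the paper: reduce the problem for fixed $\xi$ to the linear program of Theorem \ref{Theorem: Linear-Solution} with $\xi_k=\xi w_k$, observe that the QoS constraints are monotonically tightening in $\xi$ so the problem is quasi-linear, and invoke Lemma 2.9 and Theorem 2.10 of \cite{Bjornson2013d} to justify the bisection over $[0,\xi_0^{\mathrm{upper}}]$. Your closing remark about exhibiting an infeasible $\xi_0^{\mathrm{upper}}$ via interference-free SINR ceilings is precisely what the paper defers to Corollary \ref{Corollary-Bisection}, so nothing essential is missing.
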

\begin{corollary} \label{Corollary-Bisection}
If the system deploys MRT or ZF precoding, then $\xi_0^{\mathrm{upper}}$ can be selected as

\begin{equation}
\xi_0^{\mathrm{upper}} = \gamma^{\mathrm{DL}} \left( 1 - \frac{\tau_p}{\tau_c} \right) \theta.
\end{equation}
The parameter $\theta$ depends on the precoding scheme:
\vspace{0.2cm}
\begin{center}
\begin{tabular}{| c|  c |}
\hline
  &  $ \theta $ \\ 
 \hline
 MRT & $\underset{ k }{ \min } \; \frac{1}{w_k} \log_2 (1 + M ) $ \\  
 \hline
 ZF & $ \underset{ k }{ \min } \; \frac{1}{w_k} \log_2 \left(1 + ( M - K) \frac{ p_k \tau_p }{ \sigma_{ \mathrm{UL} }^2} \sum\limits_{i =1}^{L} \beta_{i,k} \right)$\\  
 \hline  
\end{tabular}
\end{center}
\vspace{0.2cm}
\end{corollary}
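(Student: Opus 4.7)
The plan is to invoke Theorem~\ref{Theorem-Bisection} and, for each of the two precoding schemes, to produce an upper bound $R_k^{\mathrm{up}}$ on the ergodic SE of user $k$ that depends only on the system parameters, not on the power variables $\{\rho_{i,t}\}$. Once such bounds are in hand, any $\xi$ with $\xi > \min_k R_k^{\mathrm{up}}/w_k$ violates the constraint $R_k/w_k \geq \xi$ for at least one user, no matter how the BSs allocate power, so the linear program \eqref{eq: Linear-Solution-CVX} is infeasible; that is exactly what Theorem~\ref{Theorem-Bisection} requires of $\xi_0^{\mathrm{upper}}$. Hence the corollary reduces to establishing $\mathrm{SINR}_k^{\mathrm{MRT}} \leq M$ and $\mathrm{SINR}_k^{\mathrm{ZF}} \leq (M-K)(p_k\tau_p/\sigma_{\mathrm{UL}}^2)\sum_i \beta_{i,k}$, after which the expression for $\xi_0^{\mathrm{upper}}$ is obtained by plugging into \eqref{eq: Sum-Rate-k}, dividing by $w_k$, and minimizing over $k$.

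For MRT I would start from \eqref{eq: SINR-MRT} and drop from the denominator every non-negative term that does not multiply $\rho_{i,k}\beta_{i,k}$: the pilot-contamination sum over $t\in\mathcal{P}_k\setminus\{k\}$, the inter-user interference from $t\neq k$, and the noise $\sigma_{\mathrm{DL}}^2$. This yields $\mathrm{SINR}_k^{\mathrm{MRT}} \leq M\sum_i \rho_{i,k}\theta_{i,k}/\sum_i \rho_{i,k}\beta_{i,k}$. Because the estimation-error covariance in \eqref{eq: Estimation-Error-Distribution} must be positive semidefinite, we have $\theta_{i,k}\leq \beta_{i,k}$ term by term, so the remaining fraction is at most one and $\mathrm{SINR}_k^{\mathrm{MRT}}\leq M$, which gives the MRT entry of the table.

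For ZF, the same ``drop the non-negative terms from the denominator'' idea applied to \eqref{eq: SINR-ZF} produces $\mathrm{SINR}_k^{\mathrm{ZF}} \leq (M-K)\sum_i \rho_{i,k}\theta_{i,k}/\sum_i \rho_{i,k}(\beta_{i,k}-\theta_{i,k})$. The key observation is that the right-hand side is a convex combination of the per-BS ratios $\theta_{i,k}/(\beta_{i,k}-\theta_{i,k})$ with non-negative weights $\rho_{i,k}(\beta_{i,k}-\theta_{i,k})$, and is therefore upper bounded by their sum. A short algebraic simplification using the explicit $\theta_{i,k}$ in Lemma~\ref{Lemma:ChannelEstimate} gives
\begin{equation*}
\frac{\theta_{i,k}}{\beta_{i,k}-\theta_{i,k}}=\frac{p_k\tau_p\beta_{i,k}}{\tau_p\sum_{t'\in\mathcal{P}_k\setminus\{k\}} p_{t'}\beta_{i,t'}+\sigma_{\mathrm{UL}}^2}\leq \frac{p_k\tau_p\beta_{i,k}}{\sigma_{\mathrm{UL}}^2},
\end{equation*}
the last inequality following because dropping the non-negative pilot-contamination terms in the denominator only enlarges the ratio. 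Summing over $i$, multiplying by $(M-K)$, and inserting into \eqref{eq: Sum-Rate-k} reproduces the ZF entry of the table.

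The only delicate point is the convex-combination-upper-bounded-by-sum step together with the degenerate case $\sum_i \rho_{i,k}(\beta_{i,k}-\theta_{i,k})=0$; but this can happen only if user $k$ receives no power from any BS, in which case $R_k=0$ and every positive $\xi$ is already infeasible. Everything else is routine algebra, so I do not anticipate a conceptual obstacle.
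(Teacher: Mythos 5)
Your proposal is correct, and it reaches the paper's two SINR bounds ($\mathrm{SINR}_k^{\mathrm{MRT}}\leq M$ and $\mathrm{SINR}_k^{\mathrm{ZF}}\leq (M-K)\frac{p_k\tau_p}{\sigma_{\mathrm{UL}}^2}\sum_{i}\beta_{i,k}$) by a genuinely different route. The paper first argues that pilot reuse can only reduce the SE, so it suffices to bound the orthogonal-pilot SINRs \eqref{eq: SINR-MRTOrthogonal} and \eqref{eq: SINR-ZFOrthogonal}; it then uses $\frac{p_k\tau_p\beta_{i,k}}{p_k\tau_p\beta_{i,k}+\sigma_{\mathrm{UL}}^2}\leq 1$ together with numerator-$\leq$-denominator for MRT, and for ZF a product inequality of the form $\sum_i x_i\beta_{i,k}\leq\bigl(\sum_i x_i\bigr)\bigl(\sum_i\beta_{i,k}\bigr)$ applied to the numerator, recognizing the leftover sum inside the denominator. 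You instead work directly with the pilot-contaminated expressions \eqref{eq: SINR-MRT} and \eqref{eq: SINR-ZF}: dropping nonnegative denominator terms and using $\theta_{i,k}\leq\beta_{i,k}$ gives the MRT bound, while for ZF you bound $\sum_i\rho_{i,k}\theta_{i,k}/\sum_i\rho_{i,k}(\beta_{i,k}-\theta_{i,k})$ as a convex combination (hence $\leq$ max $\leq$ sum) of the per-BS ratios $\frac{\theta_{i,k}}{\beta_{i,k}-\theta_{i,k}}=\frac{p_k\tau_p\beta_{i,k}}{\tau_p\sum_{t'\in\mathcal{P}_k\setminus\{k\}}p_{t'}\beta_{i,t'}+\sigma_{\mathrm{UL}}^2}\leq\frac{p_k\tau_p\beta_{i,k}}{\sigma_{\mathrm{UL}}^2}$, and you handle the degenerate all-$\rho_{i,k}=0$ case explicitly ($\beta_{i,k}-\theta_{i,k}>0$ since $\sigma_{\mathrm{UL}}^2>0$, so the weights are well defined otherwise). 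What your version buys is that the "pilot contamination only hurts" reduction, which the paper asserts without proof, is never needed: your bound holds uniformly for any pilot allocation $\mathcal{P}_k$. What the paper's version buys is brevity, since the orthogonal-pilot SINRs of Remark \ref{Remark1} are already in closed form. One cosmetic remark: for the bisection to start correctly one wants \eqref{eq: Linear-Solution-CVX} infeasible at $\xi_0^{\mathrm{upper}}$ itself; this follows because the dropped term $\sigma_{\mathrm{DL}}^2>0$ makes your SINR inequalities strict, a point you (like the paper) leave implicit but which costs nothing.
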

\begin{proof}
The proof is given in Appendix \ref{Appendix Bisection}.
\end{proof}
\begin{algorithm}[h]
\caption{Max-min QoS based on the bisection method}
\textbf{Result:} Solve optimization in \eqref{Problem: Max-Min-QoS}. 
\\ \textbf{Input:}  Initial upper bound $\xi_0^{\mathrm{upper}}$, and line-search accuracy $\delta$;
\begin{algorithmic}
\State Set  $\xi^{\mathrm{lower}}= 0$; $\xi^{\mathrm{upper}}= \xi_0^{\mathrm{upper}}$;
\While {$\xi^{\mathrm{upper}} - \xi^{\mathrm{lower}} > \delta$}
\State Set  $\xi^{\mathrm{candidate}} = (\xi^{\mathrm{upper}}+\xi^{\mathrm{lower}})/2$;
\If { \eqref{eq: Linear-Solution-CVX} is infeasible for $ \xi_k = w_k \xi^{\mathrm{candidate}}, \forall k,$ } \do
\\
 \State Set $\xi^{\mathrm{upper}} = \xi^{\mathrm{candidate}}$;
\Else

\State Set $ \{ \pmb{\rho}_{k}^{\mathrm{lower}} \}$ as the solution to \eqref{eq: Linear-Solution-CVX};

\State Set $\xi^{\mathrm{lower}} = \xi^{\mathrm{candidate}}$ ;
\EndIf
\EndWhile
\State Set $\xi_{\mathrm{final}}^{\mathrm{lower}} = \xi^{\mathrm{lower}}$ and $\xi_{\mathrm{final}}^{\mathrm{upper}} = \xi^{\mathrm{upper}}$;
\end{algorithmic}
\textbf{Output:} Final interval $[\xi_{\mathrm{final}}^{\mathrm{lower}}, \xi_{\mathrm{final}}^{\mathrm{upper}}]$ and $ \{ \tilde{\pmb{\rho}}_{k} \}= \{ \pmb{\rho}_{k}^{\mathrm{lower}} \}$;
\label{Algorithm: Bisection}
\end{algorithm}

From Theorem \ref{Theorem-Bisection}, the problem \eqref{eq: WeightSpecific3} is solved in an iterative manner. By iteratively reducing the search range and solving the problem \eqref{eq: Linear-Solution-CVX}, the maximum QoS level and optimal BS-user association can be obtained. One such line search procedure is the well-known bisection method \cite{Bjornson2013d, Boyd2004a}. At each iteration, the feasibility of \eqref{eq: Linear-Solution-CVX} is verified for a value $\xi^{\mathrm{candidate}} \in \mathcal{R}$, that is defined as the middle point of the current search range. If \eqref{eq: Linear-Solution-CVX} is feasible, then its solution $\{\pmb{\rho}_k^{\mathrm{lower}} \}$ is assigned to as the current power allocation. Otherwise, if the problem is infeasible, then a new upper bound is set up. The search range reduces by half after each iteration, since either its lower or upper bound is assigned to $\xi^{\mathrm{candidate}}$. The algorithm is terminated when the gap between these bounds is smaller than a line-search accuracy value $\delta$. The proposed max-min QoS optimization is summarized in Algorithm \ref{Algorithm: Bisection}. 

We stress that the bisection method can efficiently find the solution to quasi-linear programs such as \eqref{eq: WeightSpecific3}. The main cost for each iteration is to solve the linear program \eqref{eq: Linear-Solution-CVX} that includes $K L$ variables and $2K$ constraints and as such it has the complexity $\mathcal{O} ( K^3 L^3 )$ \cite{Boyd2004a}. It is important to note that the computational complexity does not depend on the number of BS antennas.  Moreover, the number of iterations needed for the bisection method is $ \lceil \log_2 (\xi_0^{\textrm{upper}}/\delta) \rceil$ that is directly proportional to the logarithm of the initial value $\xi_0^{\mathrm{upper} }$, where $\lceil \cdot \rceil$ is the ceiling function. Thus a proper selection for $\xi_0^{\mathrm{upper} }$ such as in Corollary \ref{Corollary-Bisection} will reduce the total cost. In summary, the polynomial complexity of Algorithm \ref{Algorithm: Bisection} is $\mathcal{O} \left(    \left \lceil \log_2 \left( \frac{\xi_0^{\textrm{upper}}}{ \delta } \right) \right \rceil K^3 L^3 \right).$

\section{Numerical Results} \label{Section:Numerical-Results}

In this section, the analytical contributions from the previous sections are evaluated by simulation results for a multi-cell Massive MIMO system. Our system comprises $4$ BSs and $K$ users, as shown in Fig.~\ref{fig:MassiveMIMOSystem-Layout}, where $(x,y)$ represent location in a Cartesian coordinate system. The symmetric BS deployment makes it easy to visualize the optimal user association rule. The users are uniformly and randomly distributed over the joint coverage of the BSs but no user is closer to the BSs than $100$ m to avoid overly large SNRs at cell-center users \cite{Marzetta2010a}. For the max-min QoS algorithm, the user specific weights are set to $w_k =1$, $ \forall k$, to make it easy to interpret the results. Since the joint power allocation and user association obtains the optimal subset of BSs that serves each user, we denote it by ``Optimal" in the figures. For comparison, we consider a suboptimal method, in which each user is associated with only one BS by selecting the strongest signal on the average (i.e., the max-SNR value). \footnote{For comparison purposes, the best benchmark is the method that also performs the optimal association but with service from only one BS. However, it is a combinatorial problem followed by the excessive computational complexity. Furthermore, the numerical results verify that the max-SNR association is a good benchmark for comparison since the performance is very close to the optimal association.} The performance is averaged over different random user locations.
\begin{figure}[t]
    \centering
    \includegraphics[width=2.3in]{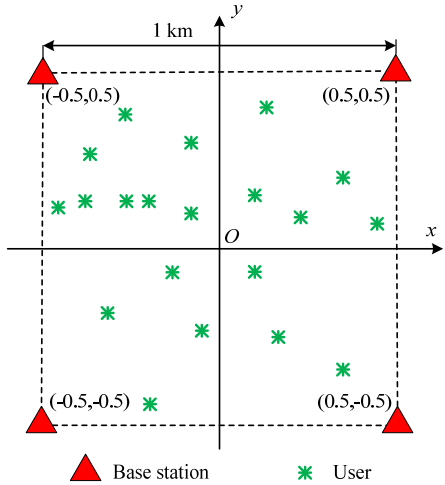}
    \caption{ The multi-cell massive MIMO system considered in the simulations: The BS locations are fixed at the corners of a square, while $K$ users are randomly distributed over the joint coverage of the BSs.}
    \label{fig:MassiveMIMOSystem-Layout}
\end{figure}
The peak radio frequency DL output power is $40$ W. The system bandwidth is $20$ MHz and the coherence interval is of $200$ symbols. We set the power amplifier efficiency to $1$ since it does not affect the optimization when all BSs have the same value. The users send orthogonal pilot sequences whose length equals the number of users and each user has a pilot symbol power of $200$ mW. \footnote{ In most cases in practice, appropriate non-universal pilot reuse renders pilot contamination negligible. Hence, we only consider the case of orthogonal pilot sequences in this section. We also assume $\tau_p = K$.}
 Because we focus on the DL transmission, the DL fraction is $\gamma^{\mathrm{DL}}=1$. The large scale fading coefficients are modeled similarly to the 3GPP LTE standard \cite{LTE2010b, Bjornson2013e}. Specifically, the shadow fading $z_{l,k}$ is generated from a log-normal Gaussian distribution with standard deviation $7$ dB. The path loss at distance $d$ km is $148.1 + 37.6 \log_{10} d$. Thus, the large-scale fading $\beta_{l,k}$ is computed by $\beta_{l,k}= - 148.1 - 37.6 \log_{10} d + z_{l,k}$ dB.  With the noise figure of $5$ dB, the noise variance for both the UL and DL is $-96$ dBm.
  \begin{figure}[t]
  		\centering
  		\includegraphics[trim=2.5cm 0.9cm 2cm 1.5cm, clip=true, width=3in]{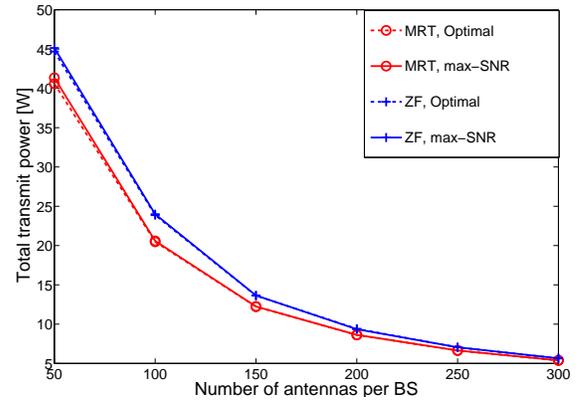}
  		\caption{ The total transmit power $( \sum_{i=1}^{L} P_i )$ versus the number of BS antennas with QoS of $1$ bit/symbol and $K=20$.}
  		\label{Fig-PowervsAntenna}
  		\vspace*{-0.5cm}
\end{figure}
We show the total transmit power $( \sum_{i=1}^{L} P_i)$ as a function of the number of antennas per BS in Fig.~\ref{Fig-PowervsAntenna} for a Massive MIMO system with $20$ users. For fair comparison, the results are averaged over the solutions that make both the association schemes feasible. Experimental results reveal a superior reduction of the total transmit power compared to the peak value, say $160$ W, in current wireless networks. Therefore, Massive MIMO can bring great transmit power reduction by itself. A system equipped with few BS antennas consumes much more transmit power to provide the same target QoS level compared to the corresponding one with a large BS antenna number. The $40-45$ W that are required with $50$ BS antennas reduces dramatically to $5$ W with $300$ BS antennas. This is due to the array gain from coherent precoding. In addition, the gap between MRT and ZF is shortened by the number of BS antennas, since interference is mitigated more efficiently \cite{Ngo2014a, Bjornson2016b}. From the experimental results, we notice that the simple max-SNR association is close to optimal in these cases.

Fig.~\ref{Fig-PowervsQoS} plots the total transmit power to obtain various target QoS levels at the $20$ users. As discussed in Section \ref{Achievable-Spectral-Efficiency}, MRT precoding works well in the low QoS regime where noise dominates the system performance, while ZF precoding consumes less power when higher QoS is required. In the low QoS regime, ZF and MRT precoding demand roughly the same transmit power. For instance, with the optimal BS-user association and QoS $= 1$ bit/symbol, the system requires the total transmit power of $8.88$ W and $9.60$ W for MRT and ZF precoding, respectively. In contrast, at a high target QoS level such as $2.5$ bit/symbol, by deloying ZF rather than MRT, the system saves transmit power up to $2.39$ W. Similar trends are observed for the max-SNR association. Because the numerical results manifest superior power reduction in comparison to the small-scale MIMO systems, Massive MIMO systems are well-suited for reducing transmit power in $5$G networks.

\begin{figure}[t]
	\centering
	\includegraphics[trim=2.5cm 0.9cm 2cm 1.5cm, clip=true, width=3in]{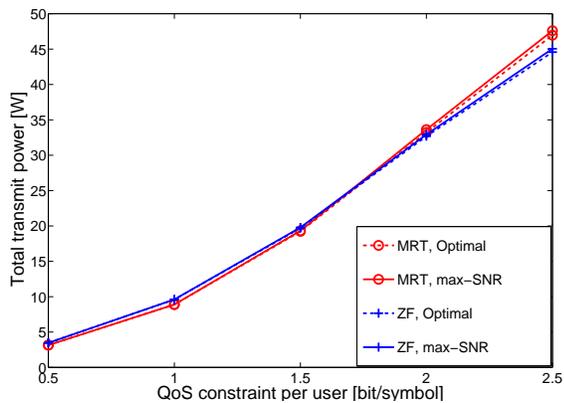}
	\caption{ The total transmit power $( \sum_{i=1}^{L} P_i )$ versus the target QoS for $M= 200 , K = 20$.}
	\label{Fig-PowervsQoS}
	\vspace*{-0.5cm}
\end{figure}
 \begin{figure}[t]
  \centering
  \includegraphics[trim=2.0cm 0.9cm 2cm 1.5cm, clip=true, width=3in]{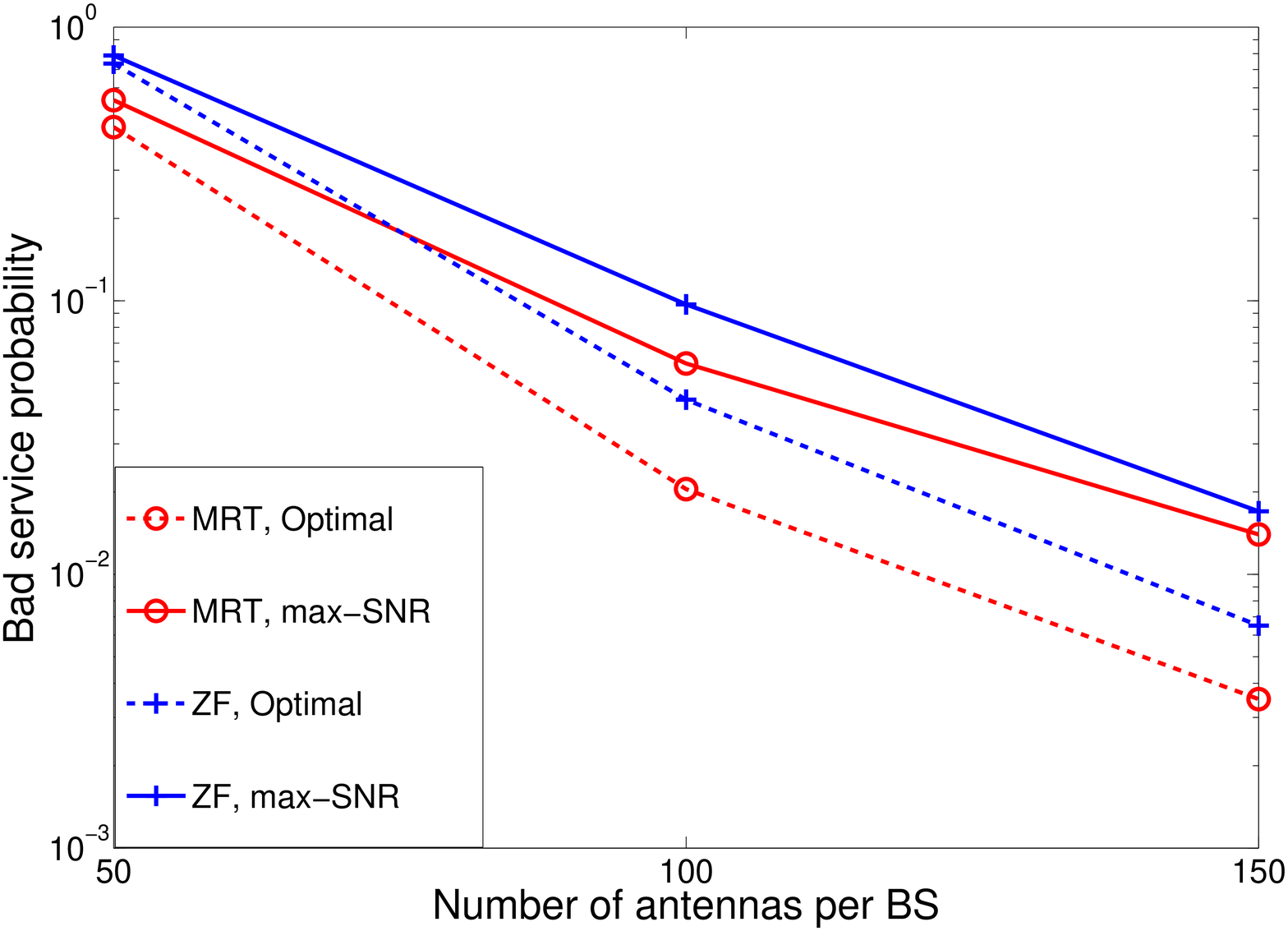}
  \caption{ The bad service probability versus the number of BS antennas with QoS of $1$ bit/symbol and $K=20$.}
  \label{Fig-InfeasibleAntennas}
  \vspace*{-0.5cm}
  \end{figure}

While optimal user association and max-SNR association give similar results in the previous figures, we stress that these only considered scenarios when both schemes gave feasible results. The main difference is that sometimes only the former can satisfy the QoS constraints. Fig.~\ref{Fig-InfeasibleAntennas} and Fig.~\ref{Fig-InfeasibleQoS} demonstrate the ``bad service probability" defined as the fraction of random user locations and shadow fading realizations in which the power minimization problem is infeasible. Note that these figures just display some ranges of the BS antennas or the QoS constraints where the differences between the user associations are particularly large. Intuitively, the optimal user association is more robust to environment variations than the max-SNR association, since the non-coherent joint transmission can help to resolve the infeasibility. In addition, the two figures also verify the difficulties in providing the high QoS. Specifically, a very high infeasibility up to about $80\%$ is observed when the BSs have a small number of antennas or the users demand high QoS levels. This is a key motivation to consider the max-min
QoS optimization problem instead, because it provides feasible solutions for any user locations and channel realizations.

Fig.~\ref{Fig-CDFofQoS} shows the cumulative distribution function (CDF) of the max-min optimized QoS level, where the randomness is due to shadow fading and different user locations. We consider $150$ BS antennas for ZF precoding or $300$ BS antennas for MRT precoding to avoid overlapping curves. The optimal user association gives consistently better QoS than the max-SNR association. The system model equipped with $300$ antennas per BS can provide SE greater than $2$ bit/symbol for every user terminal in its coverage area with high probability. The QoS can even reach up to $4$ bit/symbol. Moreover, the optimal association gains up to $22\%$ compared with the max-SNR association at $95\%$-likely max-min QoS.
  \begin{figure}[t]
  	\centering
  	\includegraphics[trim=2.0cm 0.9cm 2cm 1.5cm, clip=true, width=3 in]{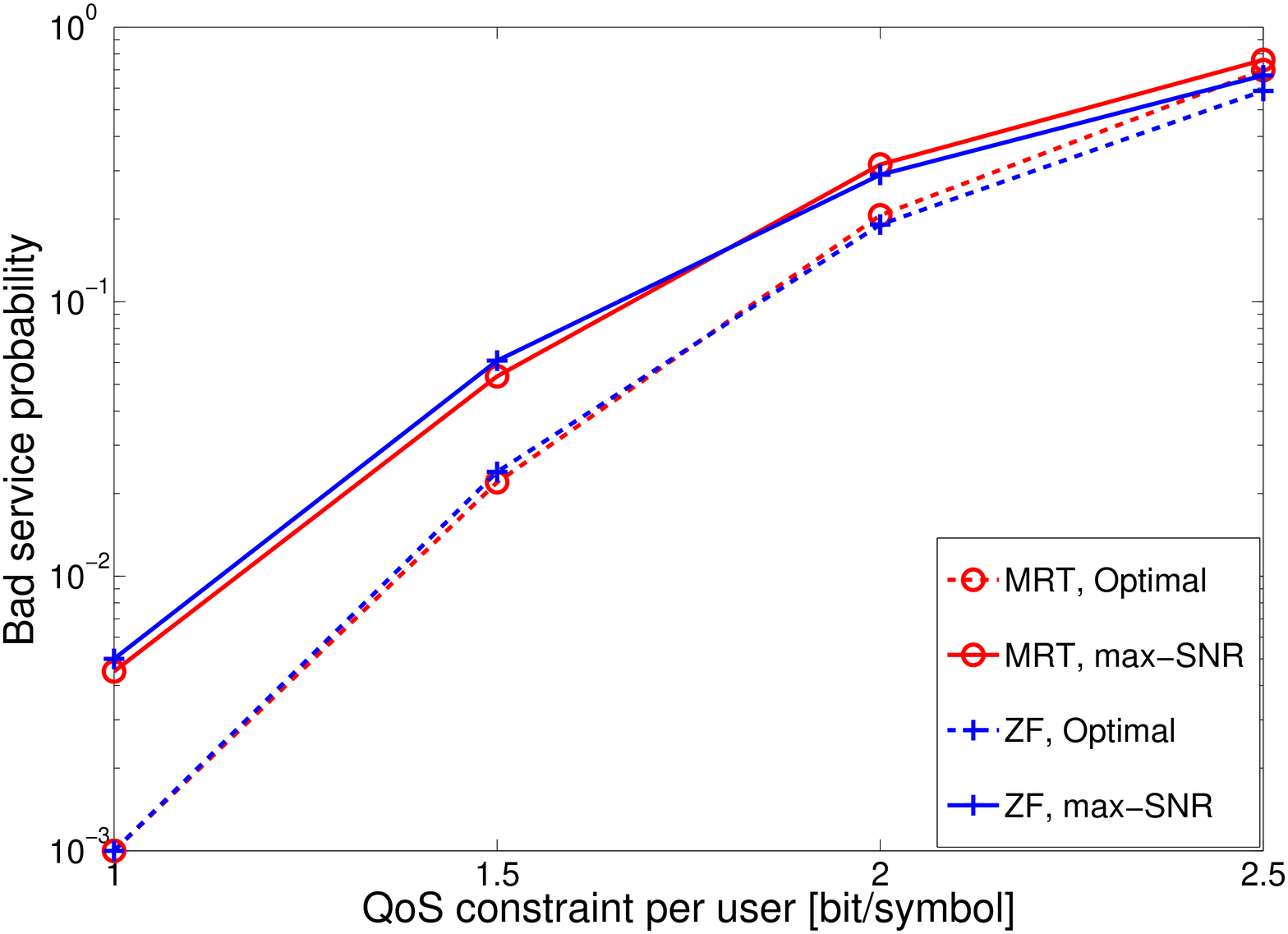}
  	\caption{The bad service probability versus the QoS constraint per user with $M=200, K=20$.}
  	\label{Fig-InfeasibleQoS}
  	\vspace*{-0.5cm}
  \end{figure}
\begin{figure}[t]
 \centering
 \includegraphics[trim=2.0cm 0.9cm 2cm 1.5cm, clip=true, width=3in]{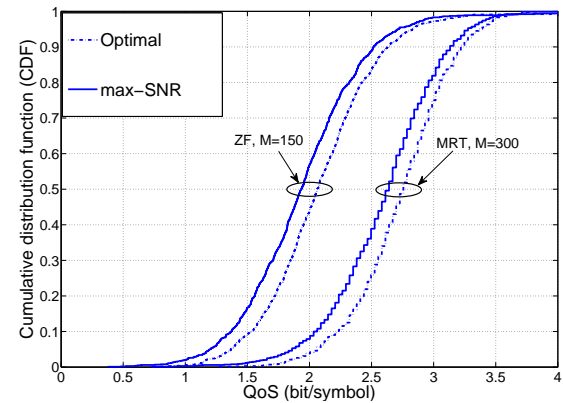}
 \caption{ The cumulative distribution function (CDF) of the max-min QoS optimization with $K=20$.}
 \label{Fig-CDFofQoS}
 \vspace*{-0.5cm}
  \end{figure}
 \begin{figure}[t]
       \centering
       \includegraphics[trim=2.0cm 0.9cm 2cm 1.5cm, clip=true, width=3 in]{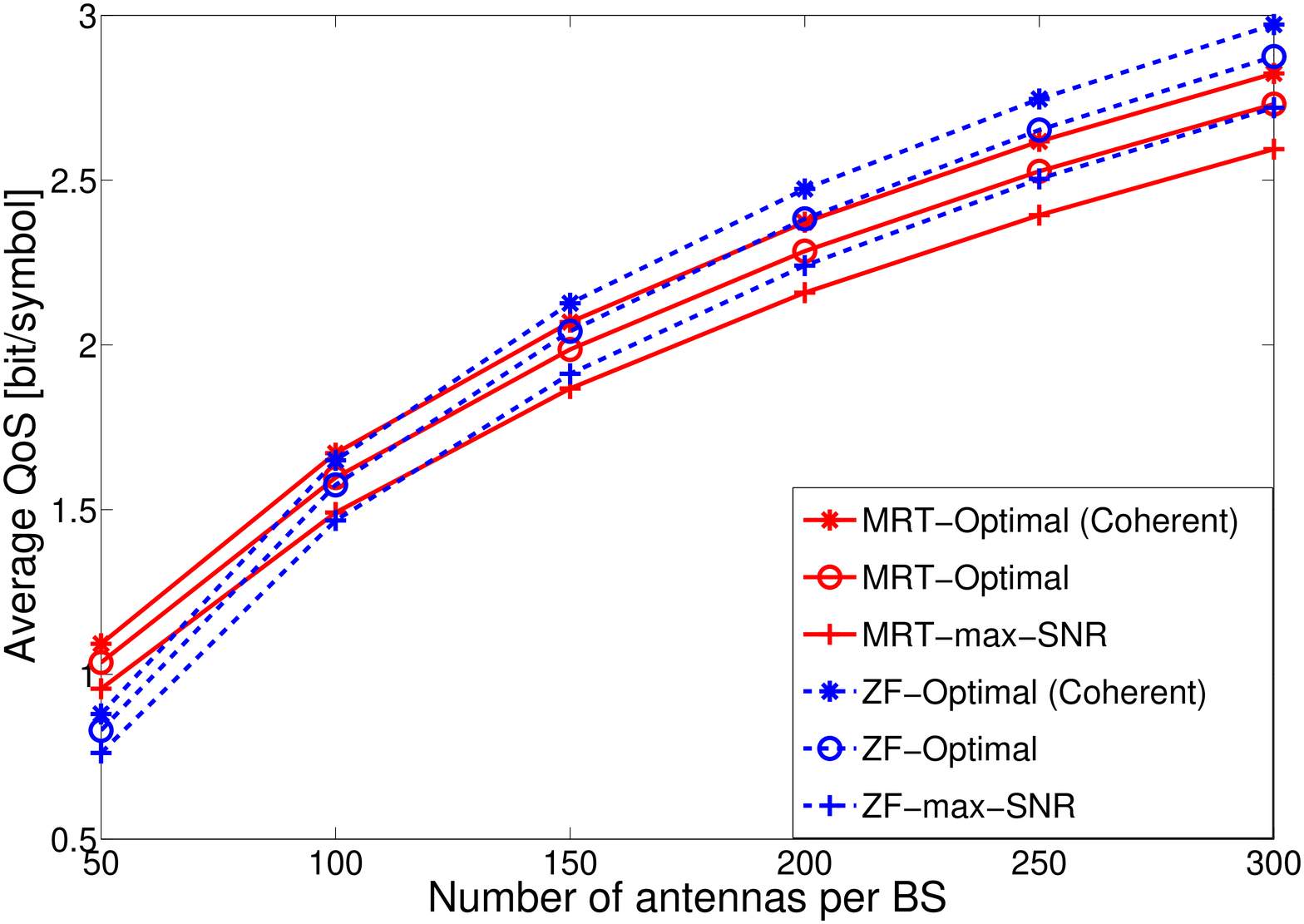}
       \caption{The average max-min QoS level versus the number of BS antennas with $K= 20$.}
       \label{Fig-AverageMaxMinQoS} 
\vspace*{-0.5cm}
\end{figure}

\begin{figure}[t]

      \centering
       \includegraphics[trim=2.0cm 0.9cm 2cm 1.5cm, clip=true, width=3 in]{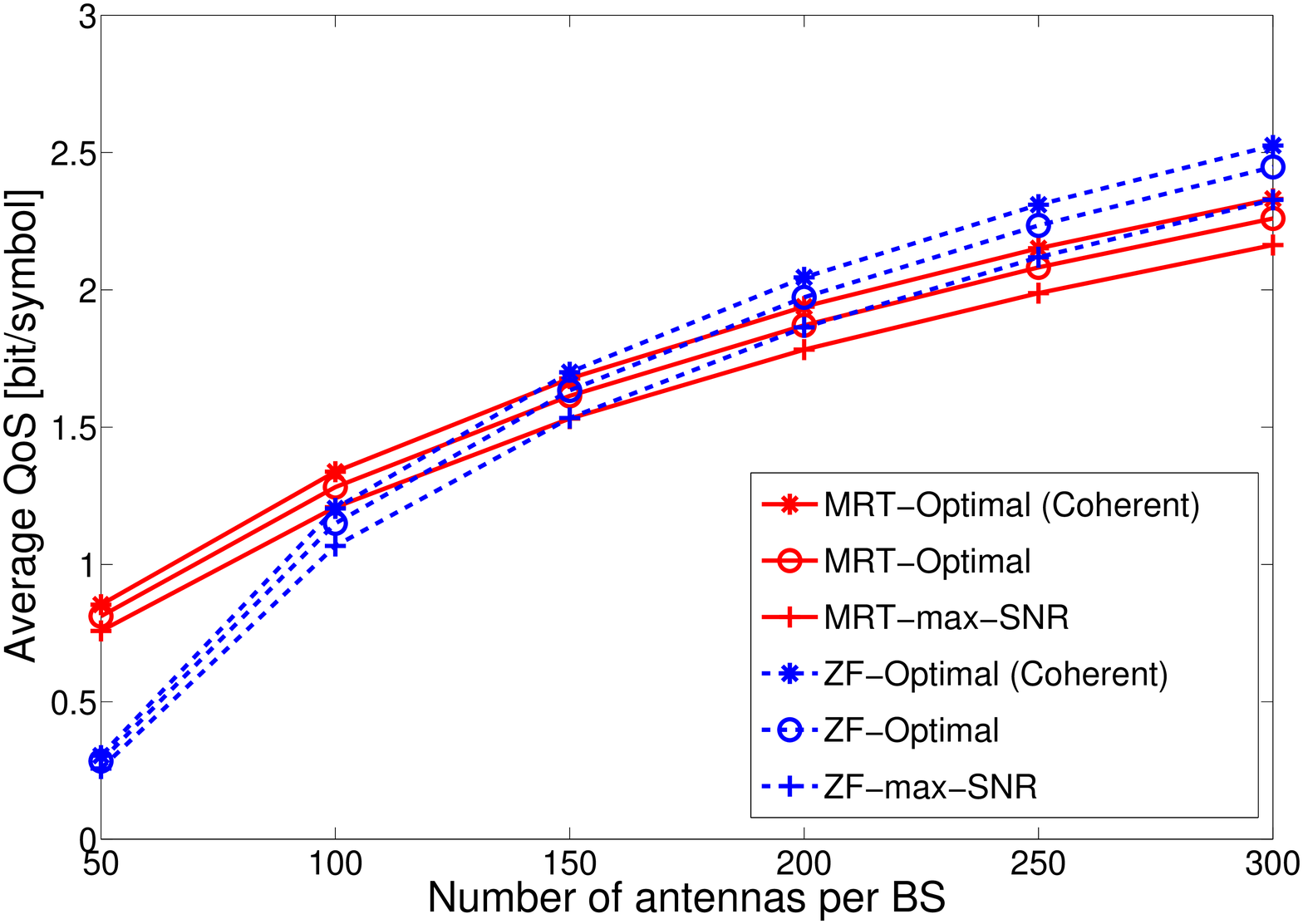}
       \caption{The average max-min QoS level versus the number of BS antennas with $K= 40$.}
      \label{Fig-AverageMaxMinQoS40} 
      \vspace*{-0.5cm}
\end{figure}  

\begin{figure}[t]
\centering
\includegraphics[trim=2.0cm 0.9cm 2cm 1.5cm, clip=true, width=3 in]{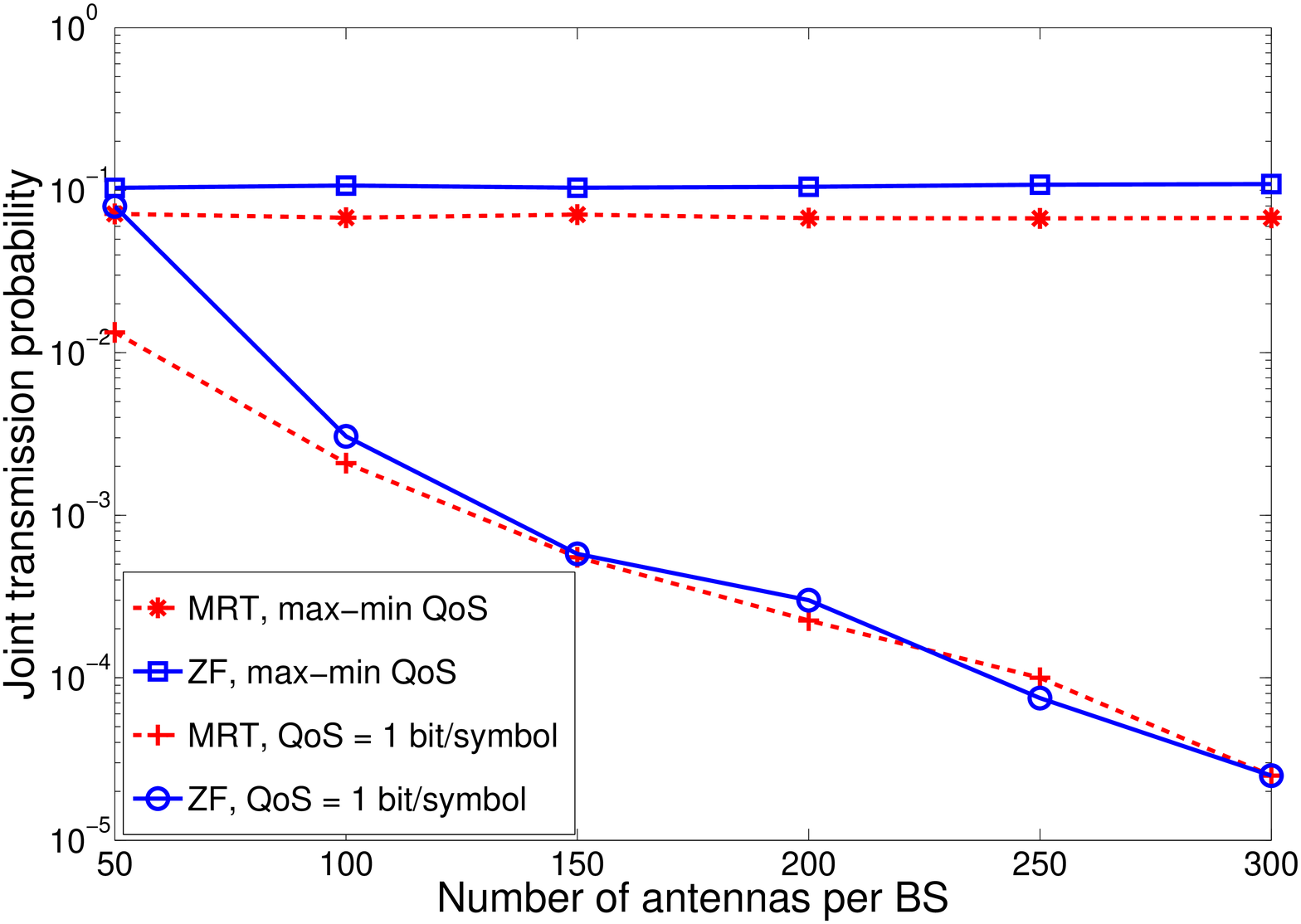}
\caption{ The joint transmission probability versus the number of antennas per BS with $K=20$.}
\label{Fig-BSAssociationProbability}     
\vspace*{-0.5cm} 
\end{figure}

The average max-min QoS levels that the system can provide to the all users is illustrated in Fig.~\ref{Fig-AverageMaxMinQoS} for $20$ users. The optimal BS-user association provides up to $11\%$ higher QoS than the max-SNR association. For completeness, we also provide the average max-min QoS levels when the system deploys the DL coherent joint transmission denoted as ``Optimal (Coherent)" in the figure. The procedures to obtain closed-form expressions as well as the optimization problems for the DL coherent joint transmission are briefly presented in Appendix \ref{Appendix: Coherent Joint Transmission}. On average, this technique can bring a gain up to $5\%$ compared to ``Optimal" but it is more complicated to implement as discussed in Section \ref{Downlink Data Transmission Model}. By deploying massive antennas at the BSs, the numerical results manifest the competitiveness of  the max-SNR association versus the ``Optimal" ones. The reason is that the multiple BS cooperation increases not only the array gain (in the numerator) but unfortunately also mutual interference (in the denominator) of the SINR expressions as shown in Corollaries \ref{Corollary-MRT-Rate} and \ref{Corollary-ZF-Rate} for non-coherent joint transmission or in \eqref{eq: Rate_Coherent} for coherent joint transmission. It is only a few users that gain from non-coherent joint transmission and the added benefit from coherent joint transmission is also small.

Fig.~\ref{Fig-AverageMaxMinQoS40} considers the same setup as Fig.~\ref{Fig-AverageMaxMinQoS} but with $40$ users. Here, the max-min QoS reduces due to more interference, while the gain from joint transmission is still small. When the number of antennas per BS is not significantly larger than the number users, MRT outperforms ZF because ZF sacrifices some of the array gain to reduce interference. Fig.~\ref{Fig-AverageMaxMinQoS} and Fig.~\ref{Fig-AverageMaxMinQoS40} also show that, for example, a system with $200$ BS antennas and using non-coherent joint transmission can serve up to $20$ users and $40$ users for the QoS requirement of $2.28$ (bit/symbol) and $1.87$ (bit/symbol) respectively.

\begin{figure}[t]
 \begin{minipage}{0.48\textwidth}
       \centering
       \includegraphics[trim=2.0cm 0.9cm 2cm 1.5cm, clip=true, width=3 in]{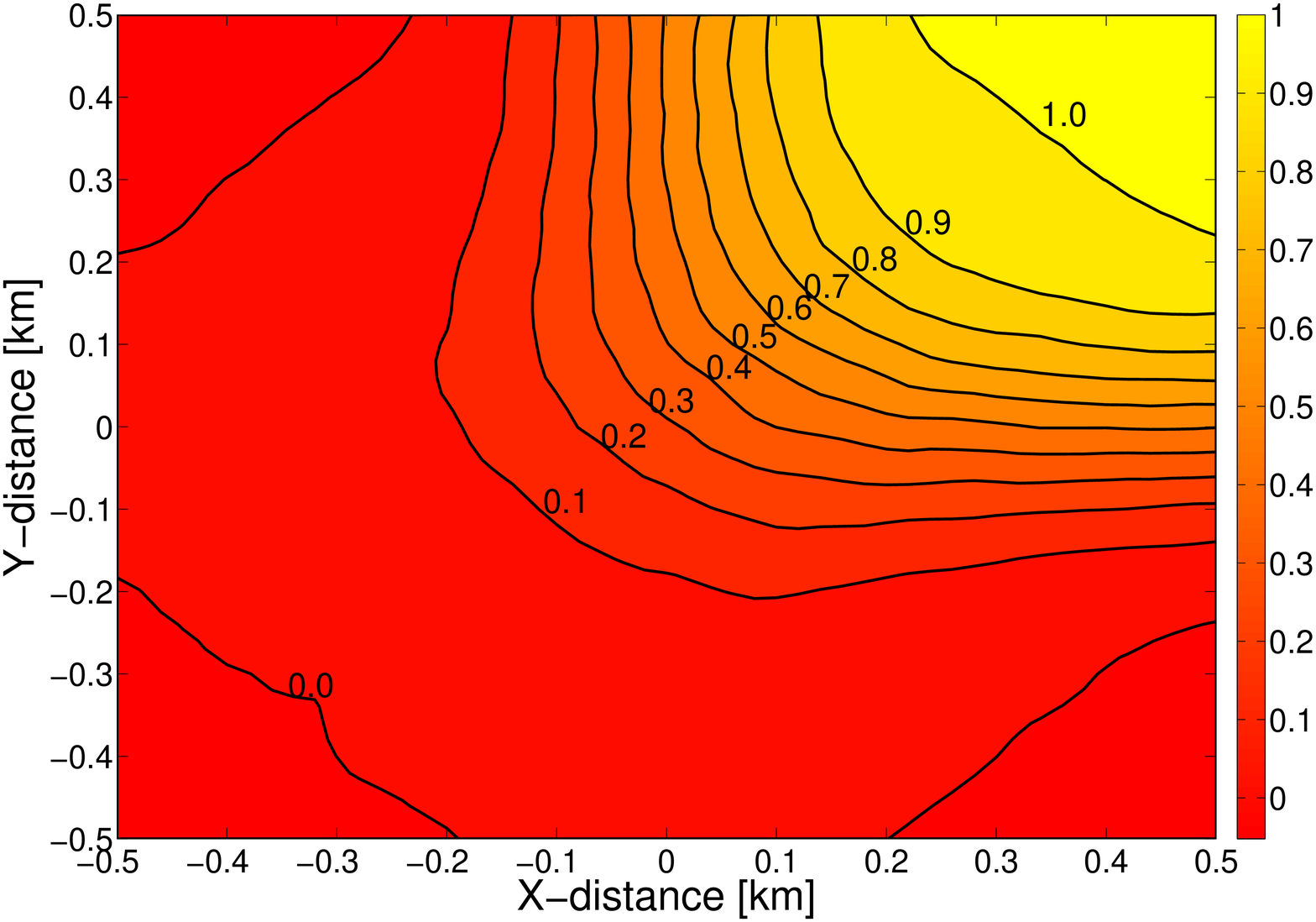}
       \caption{The probability that a user is served by BS $1$ for the max-min QoS algorithm with MRT precoding and $M =200, K = 40$.}
       \label{Fig-MRTContour}
      
 \end{minipage}
 \hfill
 \begin{minipage}{0.48\textwidth}
       \centering
       \includegraphics[trim=2.0cm 0.9cm 2cm 1.5cm, clip=true, width=3 in]{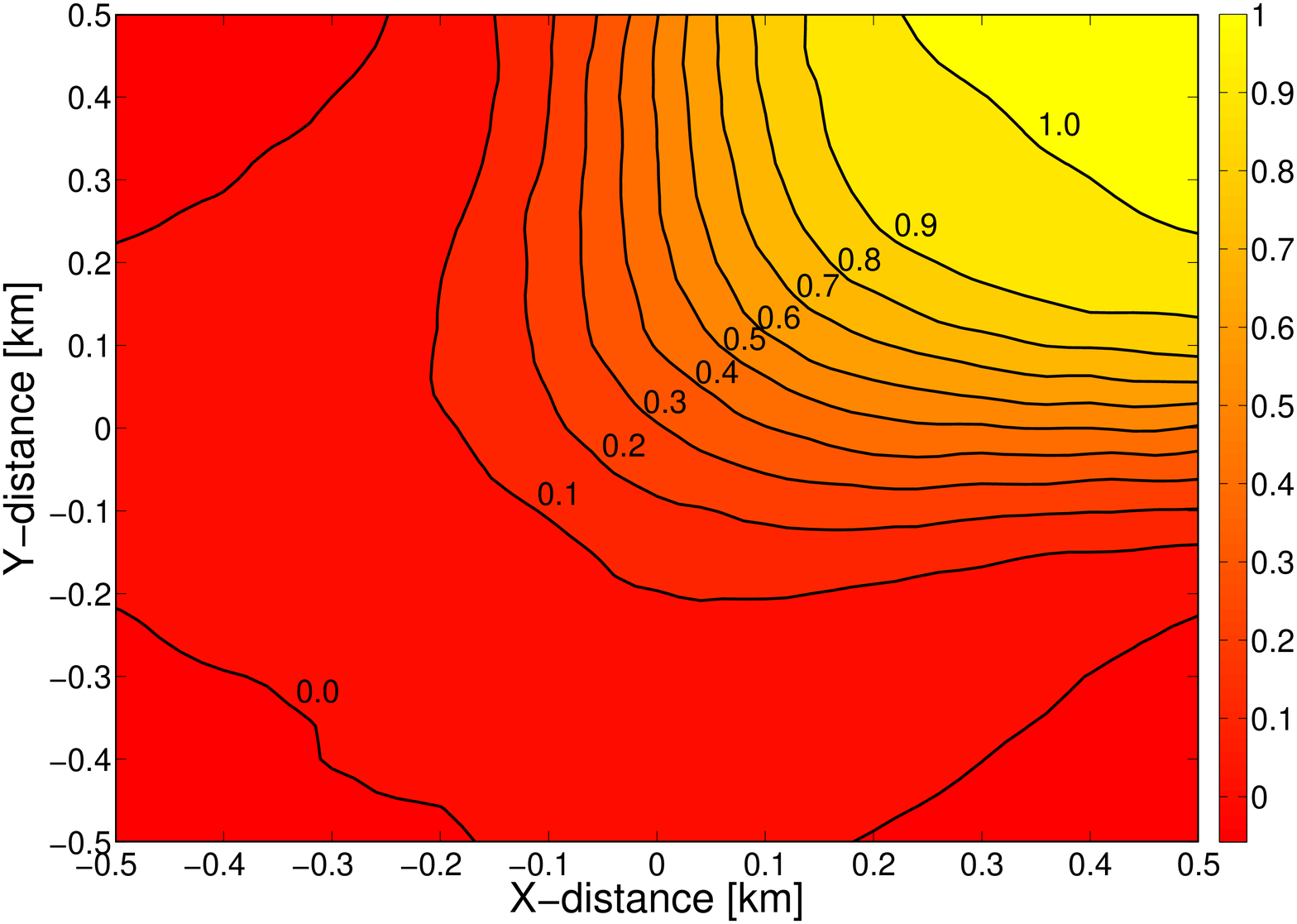}
       \caption{The probability that a user is served by BS $1$ for the max-min QoS algorithm with ZF precoding and $M =200 , K = 40$.}
       \label{Fig-ZFContour}
     
   \end{minipage}
 \end{figure}
 
The probability that a user is served by more than one BS is shown in Fig.~\ref{Fig-BSAssociationProbability}. For pair comparison, we consider $20$ users for both fixed QoS and max-min QoS. Although the system model lets BSs cooperate with each other to serve the users, experimental results verify that single-BS association is enough in $90 \%$ or more of the cases. This result for multi-cell Massive MIMO systems is similar to those obtained by multi-tier heterogeneous network with multiple-antenna \cite{Li2015} or single-antenna macrocell BSs \cite{Ye2013a}. In the remaining cases corresponding to Case $2$ in Theorem \ref{Theorem-BS-Association}, multiple BSs are required to deal with severe shadow fading realizations or high user loads.

Fig.~\ref{Fig-MRTContour} and Fig.~\ref{Fig-ZFContour} show the probabilities of users being associated with BS $1$ located at the coordinate $(0.5, 0.5)$ as a function of user locations. Intuitively, users near the BS in the sense of physical distance tend to associate with high probability. For example, most user locations that have their coordinates $(X > 0.1, Y > 0.1)$ are served by this BS with a probability larger than $0.5$. In contrast, the users that lie around the origin are only served by BS $1$ with probability less than $0.4$ and they are likely to associate with multiple BSs. We also observe that BS $1$ associates with some very distant users (i.e., they are not located in Quadrant $1$ as shown in Fig.~\ref{fig:MassiveMIMOSystem-Layout}). These situations occur due to severe shadow fading realizations or due to high user loads which make the closest BS not be the best selection.

\section{Conclusion} \label{Section:Conclusion}
This paper proposed a new method to jointly optimize the power allocation and user association in multi-cell Massive MIMO systems. The DL non-coherent joint transmission was designed to minimize the total transmit power consumption while satisfying QoS constraints. For Rayleigh fading channels, we proved that the total transmit power minimization problem with MRT or ZF precoding is a linear program, so it is always solvable to global optimality in polynomial time. Additionally, we provided the optimal BS-user association rule to serve the users. In order to ensure that all users are fairly treated, we also solved the weighted max-min fairness optimization problem which maximizes the worst QoS value among the users with user specific weighting. Experimental results manifested the effectiveness of our proposed methods, and that the max-SNR association works well in many Massive MIMO scenarios but is not optimal.
    
\appendix

\subsection{Proof of Proposition \ref{Proposition: Rate} and Theorem \ref{Theorem-Lower-Bound-Rate} } \label{Appendix Proposition: Rate} 
When user $k$ detects the signal from BS $1$, it will not know any of the transmitted signals. Therefore, the received signal in \eqref{eq: Downlink-Signal} is expressed as
\begin{equation} \label{eq: Signal-s_1k}
\begin{split}
& y_{1,k}  = y_k = \sqrt{ \rho_{1,k} } \mathbb{E} \{ \mathbf{h}_{1,k}^H \mathbf{w}_{1,k} \} s_{1,k} \\
& \; \; \; +  \sqrt{ \rho_{1,k} } \left(\mathbf{h}_{1,k}^H \mathbf{w}_{1,k} - \mathbb{E} \{ \mathbf{h}_{1,k}^H \mathbf{w}_{1,k} \} \right) s_{1,k} \\
& \; \; \; + \sum_{ i =2}^{L}  \sqrt{ \rho_{i,k} } \mathbf{h}_{i,k}^H \mathbf{w}_{i,k} s_{i,k} +
\sum_{i =1}^{L} \sum_{ \substack{t=1 \\ t \neq k} }^{K} \sqrt{ \rho_{i,t} } \mathbf{h}_{i,k}^H \mathbf{w}_{i,t} s_{i,t}+ n_k.
\end{split} 
\end{equation}
At the point when user $k$ detects the signal coming from BS $l$, for $l = 2, \ldots, L$, it possesses the set of all detected signals coming from the first $(l-1)$ BSs. The received signal in \eqref{eq: Downlink-Signal} is processed by subtracting the known signals over the known average channels as
\begin{equation} \label{eq: Signal-s_lk}
\begin{split}
& y_{l,k}  = y_k - \sum_{i=1}^{l-1} \sqrt{ \rho_{i,k} } \mathbb{E} \{ \mathbf{h}_{i,k}^H \mathbf{w}_{i,k} \} s_{i,k} = \sqrt{ \rho_{l,k} } \mathbb{E} \{ \mathbf{h}_{l,k}^H \mathbf{w}_{l,k} \} s_{l,k} \\
& \; \; \;+ \sqrt{ \rho_{l,k} } \left( \mathbf{h}_{l,k}^H \mathbf{w}_{l,k} - \mathbb{E} \{ \mathbf{h}_{l,k}^H \mathbf{w}_{l,k} \} \right) s_{l,k} \\
 & \; \; \;+ \sum_{ i = l +1 }^{L} \sqrt{ \rho_{i,k} } \mathbf{h}_{i,k}^H \mathbf{w}_{i,k} s_{i,k}  \\
& \; \; \; + \sum_{i=1}^{l-1} \sqrt{ \rho_{i,k} } ( \mathbf{h}_{i,k}^H \mathbf{w}_{i,k}  -  \mathbb{E} \{ \mathbf{h}_{i,k}^H \mathbf{w}_{i,k} \} )  s_{i,k}  \\
 &\; \;  \; +
  \sum_{ i = 1 }^{L} \sum_{ \substack{ t =1 \\ t \neq k} }^{K} \sqrt{ \rho_{i,t} }  \mathbf{h}_{i,k}^H \mathbf{w}_{i,t} s_{i,t} + n_k.
\end{split}
\end{equation}

In \eqref{eq: Signal-s_1k} and \eqref{eq: Signal-s_lk}, we respectively add-and-subtract the terms $\sqrt{ \rho_{1,k} } \mathbb{E} \{ \mathbf{h}_{1,k}^H \mathbf{w}_{1,k} \} s_{1,k}$ and  $\sqrt{ \rho_{l,k} } \mathbb{E} \{ \mathbf{h}_{l,k}^H \mathbf{w}_{l,k} \} s_{l,k}$. As a result, the first term after their second equality contains the desired signal from BS $l$ that is now transmitted over a deterministic channel while other terms are treated as uncorrelated noise. A lower bound on the ergodic capacity $C_{l,k}$ of the transmission from BS $l$ is obtained by considering Gaussian noise as the worst case distribution of the uncorrelated noise \cite{Hassibi2003a},
\begin{equation}
C_{l,k} \geq \gamma^{\textrm{DL}} \left(1-\frac{\tau_p}{\tau_c} \right) \log_2 \left(1 + \frac{ \mathbb{E} \{ |\mathtt{DS}_{l,k}|^2 \}}{\mathbb{E} \{ |\mathtt{UN}_{l,k}|^2 \}} \right),
\end{equation}
where the desired signal power $\mathbb{E} \{ |\mathtt{DS}_{l,k}|^2 \}$ is computed as
\begin{equation} \label{eq: NumeratorSINR}
\mathbb{E} \{ |\mathtt{DS}_{l,k}|^2 \} = \rho_{l,k} | \mathbb{E} \{ \mathbf{h}_{l,k}^H \mathbf{w}_{l,k} \} |^2
\end{equation}
and the uncorrelated noise power $\mathbb{E} \{ |\mathtt{UN}_{l,k}|^2 \}$ is computed as
\begin{equation} \label{eq: DenominatorSINR}
\begin{split}
& \mathbb{E} \{ |\mathtt{UN}_{l,k}|^2 \} =  \rho_{l,k}  \mathbb{E} \{ | \mathbf{h}_{l,k}^H \mathbf{w}_{l,k} - \mathbb{E} \{ \mathbf{h}_{l,k}^H \mathbf{w}_{l,k} \} |^2 \}\\
& \; \; \; + \sum_{ i = l +1 }^{L} \rho_{i,k} \mathbb{E} \{  |\mathbf{h}_{i,k}^H \mathbf{w}_{i,k}|^2 \} \\
& \; \; \; + 
 \sum_{i=1}^{l-1}  \rho_{i,k} \mathbb{E} \{ | \mathbf{h}_{i,k}^H \mathbf{w}_{i,k}  -  \mathbb{E} \{ \mathbf{h}_{i,k}^H \mathbf{w}_{i,k} \} |^2 \} \\
 &\; \; \; + 
\sum_{ i = 1 }^{L} \sum_{ \substack{ t =1 \\ t \neq k} }^{K} \rho_{i,t} \mathbb{E} \{ |\mathbf{h}_{i,k}^H \mathbf{w}_{i,t} |^2  \} + \sigma_{\textrm{DL}}^2\\
&=\sum_{ i=1 }^{L} \sum_{ t =1 }^{K} \rho_{i,t} \mathbb{E} \{ | \mathbf{h}_{i,k}^H \mathbf{w}_{i,t} |^2 \}  -  \sum_{ i =1}^{l}  \rho_{i,k} | \mathbb{E} \{ \mathbf{h}_{i,k}^H \mathbf{w}_{i,k} \} |^2  + \sigma_{ \mathrm{DL} }^2. 
\end{split}
\end{equation}
By letting $\textrm{SINR}_{l,k} = \frac{ \mathbb{E} \{ |\mathtt{DS}_{l,k}|^2 \}}{\mathbb{E} \{ |\mathtt{UN}_{l,k}|^2 \}} $, and then subtracting \eqref{eq: NumeratorSINR} and \eqref{eq: DenominatorSINR} into the SINR value we obtain the DL ergodic rate between each BS and user $k$ as stated in Proposition \ref{Proposition: Rate}.

We have proved Proposition \ref{Proposition: Rate} and will detect the $L$ signals in a successive manner to prove Theorem \ref{Theorem-Lower-Bound-Rate}. Consequently, a lower bound on the sum SE of user $k$ is obtained by
\begin{equation}
\begin{split}
& R_k = \sum_{l =1 }^{L} R_{l ,k} = \gamma^{\textrm{DL}} \left(1-\frac{\tau_p}{\tau_c} \right) \log_2 \left( \prod_{l=1}^{L}  \underbrace{( 1 + \textrm{SINR}_{l,k} )}_{=  \mathcal{A}_{l,k} } \right),
\end{split}
\end{equation}
where $\mathcal{A}_{l,k}$ is given as
  \begin{equation} \label{eq: A_lk}
 \frac{\sum\limits_{i = 1}^{L} \sum\limits_{t =1}^{K} \rho_{i,t} \mathbb{E} \{ | \mathbf{h}_{i,k}^H \mathbf{w}_{i,t} |^2 \} - \sum\limits_{ i=1}^{l-1}  \rho_{i,k} | \mathbb{E} \{ \mathbf{h}_{i,k}^H \mathbf{w}_{i,k} \} |^2 + \sigma_{ \mathrm{DL} }^2 }{  \sum\limits_{i = 1}^{L} \sum\limits_{t =1}^{K} \rho_{i,t} \mathbb{E} \{ | \mathbf{h}_{i,k}^H \mathbf{w}_{i,t} |^2 \} - \sum\limits_{ i=1}^{l}  \rho_{i,k} | \mathbb{E} \{ \mathbf{h}_{i,k}^H \mathbf{w}_{i,k} \} |^2 + \sigma_{ \mathrm{DL} }^2 }.
  \end{equation}
It is observed that the denominator of $\mathcal{A}_{l,k}$ coincides with the numerator of $\mathcal{A}_{l+1,k},$ for $l =1, \ldots, L-1$. Thus, after some manipulation which cancels out these coincided terms, we obtain
\begin{equation}
\prod_{l=1}^{L} \mathcal{A}_{l,k} = \frac{\mathtt{A}_{\mathtt{Num}}}{\mathtt{A}_{\mathtt{Den}}},
\end{equation}
where the values $\mathtt{A}_{\mathtt{Num}}$ and $\mathtt{A}_{\mathtt{Den}}$ are defined as
\begin{equation*}
\mathtt{A}_{\mathtt{Num}} = \sum_{i = 1}^{L} \sum_{t =1}^{K} \rho_{i,t} \mathbb{E} \{ | \mathbf{h}_{i,k}^H \mathbf{w}_{i,t} |^2 \} + \sigma_{ \mathrm{DL} }^2, 
\end{equation*}
\begin{equation*}
\mathtt{A}_{\mathtt{Den}} =  \sum_{i = 1}^{L} \sum_{t =1}^{K} \rho_{i,t} \mathbb{E} \{ | \mathbf{h}_{i,k}^H \mathbf{w}_{i,t} |^2 \} - \sum_{ i=1}^{L}  \rho_{i,k} | \mathbb{E} \{ \mathbf{h}_{i,k}^H \mathbf{w}_{i,k} \} |^2 + \sigma_{ \mathrm{DL} }^2. 
\end{equation*}
By simplifying the ratio $\mathtt{A}_{\mathtt{Num}}/\mathtt{A}_{\mathtt{Deno}} $, $R_k$ is given as \eqref{eq: Sum-Rate-k} in the theorem.


\subsection{Proof of Corollary~\ref{Corollary-MRT-Rate}}
 \label{Appendix Corollary-MRT-Rate}
Because the channels are Rayleigh fading, the expected squared norm of the channel between BS $i$ and user $t$ is
\begin{equation} \label{eq: Second-Moment-Channel}
\mathbb{E} \{ \| \hat{ \mathbf{h} }_{i,t} \|^2  \}  = M \theta_{i,t}.
\end{equation}
Combining \eqref{eq: Second-Moment-Channel} and \eqref{eq: Linear-Precoding-Vector}, the MRT precoding vector $ \mathbf{w}_{i,t}$ is
\begin{equation}
  \mathbf{w}_{i,t} = \frac{1}{\sqrt{M \theta_{i,t}}} \hat{ \mathbf{h} }_{i,t}.
\end{equation}
Since the estimation error is independent of the corresponding estimate, the numerator in \eqref{eq: SINR_k} is
\begin{equation} \label{eq:MRT-Numerator}
\begin{split}
&\sum_{i=1}^{L} \rho_{i,k} | \mathbb{E} \{  \mathbf{h}_{i,k}^H  \mathbf{w}_{i,k}  \} |^2 = 
M  \sum_{i =1}^{L} \rho_{i,k} \theta_{i,k}.
\end{split}
\end{equation}
In addition, we reformulate the denominator in \eqref{eq: SINR_k} as
\begin{equation} \label{eq:Denominator_new}
\sum_{i=1}^L \sum_{t=1}^K \rho_{i,t} \mathbb{E} \{ | \mathbf{h}_{i,k}^H \mathbf{w}_{i,t}|^2 \} - \sum_{i=1}^L \rho_{i,k} |\mathbb{E} \{ \mathbf{h}_{i,k}^H \mathbf{w}_{i,k} \} |^2 + \sigma_{\textrm{DL}}^2.
\end{equation}
The first summation of the denominator in \eqref{eq:Denominator_new} is decomposed into two parts based on the pilot reuse set $\mathcal{P}_k$ as follows
\begin{equation} \label{eq: MRT-Denominator1}
\begin{split}
&\sum_{i=1}^{L} \sum_{t=1}^K \rho_{i,t} \mathbb{E} \{ | \mathbf{h}_{i,k}^H \mathbf{w}_{i,t} |^2 \} \\
&= 
\sum_{i=1}^L \sum_{t \in \mathcal{P}_k} \rho_{i,t} \mathbb{E} \{ |\mathbf{h}_{i,k}^H \mathbf{w}_{i,t}|^2 \} + \sum_{i=1}^L \sum_{t \notin \mathcal{P}_k} \rho_{i,t} \mathbb{E} \{ |\mathbf{h}_{i,k}^H \mathbf{w}_{i,t}|^2 \} \\
&=  
\sum_{i=1}^L \sum_{t \in \mathcal{P}_k} \rho_{i,t} \mathbb{E} \{ |\hat{\mathbf{h}}_{i,k}^H \mathbf{w}_{i,t}|^2\} + \sum_{i=1}^L \sum_{t \in \mathcal{P}_k} \rho_{i,t} \mathbb{E} \{|\mathbf{e}_{i,k}^H \mathbf{w}_{i,t}|^2\} \\
& \; \; \; + \sum_{i=1}^L \sum_{t \notin \mathcal{P}_k } \rho_{i,t} \beta_{i,k} \\
&\stackrel{(a)}{=} \sum_{i=1}^L \sum_{t \in \mathcal{P}_k} \rho_{i,t} \frac{p_k \beta_{i,k}^2 }{p_t \beta_{i,t}^2} \mathbb{E} \{ |\hat{\mathbf{h}}_{i,t}^H \mathbf{w}_{i,t}|^2\} \\
& \; \; \; + \sum_{i=1}^L \sum_{t \in \mathcal{P}_k} \rho_{i,t} \left( \beta_{i,k} - \theta_{i,k} \right) 
 + \sum_{i=1}^L \sum_{t \notin \mathcal{P}_k } \rho_{i,t} \beta_{i,k} \\
& \stackrel{(b)}{=} M \sum_{i=1}^L \sum_{t \in \mathcal{P}_k} \rho_{i,t} \theta_{i,k}  + \sum_{i=1}^L \sum_{t =1}^K \rho_{i,t} \beta_{i,k} .
\end{split}
\end{equation}
In \eqref{eq: MRT-Denominator1},  the relationship between the channel estimates of two users utilizing the same pilot sequences as stated in \eqref{eq:Channel_Relationship} is used to compute $(a)$. For $(b)$, we use Lemma 2.9 in \cite{Tulino2004} to compute the fourth-order moment $\mathbb{E} \{ || \hat{ \mathbf{h} }_{i,t} ||^4 \}$.  The denominator in \eqref{eq: SINR_k}  is obtained by plugging \eqref{eq:MRT-Numerator} and \eqref{eq: MRT-Denominator1} into \eqref{eq:Denominator_new}.  Combining this denominator and the numerator in \eqref{eq:MRT-Numerator}, the SINR value is shown in the corollary.
\vspace*{-0.3cm}
\subsection{Proof of Corollary~\ref{Corollary-ZF-Rate}} \label{Appendix Corollary-ZF-Rate}
\vspace*{-0.2cm}
By utilizing Lemma 2.10 in \cite{Tulino2004} for a $K \times K$ central complex Wishart matrix with $M$ degrees of freedom which satisfies $M \geq K+1$, we obtain
\begin{equation}
\mathbb{E} \{ \| \hat{\mathbf{H}}_i \mathbf{r}_{i,t} \|^2 \} = \mathbb{E} \{ [ \hat{\mathbf{H}}_i^H \hat{\mathbf{H}}_i ]^{-1}_{t,t} \} =
\frac{1}{ (M-K)\theta_{i,t}}.
\end{equation}
Hence, the ZF precoding vector $\mathbf{w}_{i,t}$ becomes
\begin{equation} \label{eq: ZF-Precoding-VectorProof}
\mathbf{w}_{i,t} =  \sqrt{ (M-K) \theta_{i,t}} \hat{\mathbf{H}}_i \mathbf{r}_{i,t}. 
\end{equation}
Combining the result in \eqref{eq: ZF-Precoding-VectorProof}, the ZF properties in \eqref{eq: ZF-Property}, and the independence between channel estimates and estimation errors, the numerator of \eqref{eq: SINR_k} becomes
\begin{equation} \label{eq: ZF-Numerator}
\sum_{i =1 }^{L} \rho_{i,k} \left| \mathbb{E} \left\{ \mathbf{h}_{i,k}^H \mathbf{w}_{i,k} \right\} \right|^2 = (M-K) \sum_{ i =1}^{L} \rho_{i,k} \theta_{i,k} .
\end{equation}
Similarly, the first part of the denominator in \eqref{eq:Denominator_new} is
\begin{equation} \label{eq: ZF-Denominator1}
\begin{split}
& \sum_{i=1}^{L} \sum_{t=1}^K \rho_{i,t} \mathbb{E} \{ | \mathbf{h}_{i,k}^H \mathbf{w}_{i,t} |^2 \} \\
&= 
\sum_{i=1}^L \sum_{t \in \mathcal{P}_k} \rho_{i,t} \mathbb{E} \{ |\hat{\mathbf{h}}_{i,k}^H \mathbf{w}_{i,t}|^2 \} + \sum_{i=1}^L \sum_{t =1}^K \rho_{i,t} \mathbb{E} \{ |\mathbf{e}_{i,k}^H \mathbf{w}_{i,t}|^2 \} \\
&= 
\sum_{i=1}^L \sum_{t \in \mathcal{P}_k} \rho_{i,t} \frac{p_k \beta_{i,k}^2 }{p_t \beta_{i,t}^2  }\mathbb{E} \{ |\hat{\mathbf{h}}_{i,t}^H \mathbf{w}_{i,t}|^2 \} + \sum_{i=1}^L \sum_{t =1}^K \rho_{i,t} \left( \beta_{i,k} - \theta_{i,k} \right) \\
&= (M-K) \sum_{i=1}^L \sum_{t \in \mathcal{P}_k} \rho_{i,t} \theta_{i,k} + \sum_{i=1}^L \sum_{t =1}^K \rho_{i,t} \left( \beta_{i,k} - \theta_{i,k} \right) .
\end{split}
\end{equation}
Combining \eqref{eq: ZF-Numerator} and \eqref{eq: ZF-Denominator1}, the denominator of \eqref{eq: SINR_k} is 
\begin{equation} \label{eq: ZF-Denominator}
(M-K) \sum_{i=1}^L \sum_{t \in \mathcal{P}_k \setminus \{k\} } \rho_{i,t} \theta_{i,k} + \sum_{i=1}^L \sum_{t =1}^K \rho_{i,t} \left( \beta_{i,k} - \theta_{i,k} \right) + \sigma_{\textrm{DL}}^2.
\end{equation}
Plugging \eqref{eq: ZF-Numerator} and \eqref{eq: ZF-Denominator} to \eqref{eq: SINR_k}, we get the SINR value as shown in the corollary. 

\subsection{Proof of Theorem~\ref{Theorem-BS-Association}} \label{Appendix BS-Association}
To prove this result, we first make a change of variable to  $\mathbf{u}_t =[\sqrt{\rho_{1,t}},\ldots, \sqrt{\rho_{L,t}} ]^T \in \mathbb{C}^{L}$ and define the diagonal matrix $\mathbf{A}_t = \mathrm{diag}(a_{1,t}, \ldots, a_{L,t} ) \in \mathbb{C}^{L \times L}$, where $a_{i,t}$, for $i = 1, \ldots, L$ are elements of $\mathbf{a}_t$. The Lagrangian in \eqref{eq: Langrangian} is then converted to a quadratic function
\begin{equation} \label{eq: Lagrangian-Proof}
\begin{split}
\mathcal{L} \left( \mathbf{u}_t, \lambda_k, \mu_i \right) &= \sum_{k=1}^{K} \lambda_k - \sum_{i=1}^{L} \mu_i P_{ \mathrm{max},i} +  \sum_{t=1}^{K} \mathbf{u}_t^T  \mathbf{A}_t \mathbf{u}_t.
\end{split}
\end{equation}
The Lagrange dual function of \eqref{eq: Lagrangian-Proof} is further formulated as
\begin{equation} \label{eq: Langrangian-Duality-Proof}
\begin{split}
&\mathcal{G}\left(\lambda_k, \mu_i \right) = \underset{  \{ \pmb{\rho}_t \} }{ \inf} \; \mathcal{L} \left( \mathbf{u}_t, \lambda_k, \mu_i \right) \\
&=
\sum_{k=1}^{K} \lambda_k \sigma_{ \mathrm{DL} }^2 - \sum_{i=1}^{L} \mu_i P_{\mathrm{max},i} + \underset{ \{ \mathbf{u}_t \} }{ \inf } \;  \sum_{t=1}^{K} \mathbf{u}_t^T  \mathbf{A}_t \mathbf{u}_t.
\end{split}
\end{equation}
Therefore, $\mathcal{G} \left( \lambda_k, \mu_i \right)$ is bounded from below if and only if $\mathbf{A}_t \succeq  0$. Taking the first-order derivative of the Lagrangian in \eqref{eq: Lagrangian-Proof} with respect to $\mathbf{u}_t$, we obtain
\begin{equation} \label{eq: First-Derivative}
2 \check{ \mathbf{A} }_t \check{ \mathbf{u} }_t = \mathbf{0},
\end{equation}
where $ \check{\mathbf{A}}_t $ and $\check{\mathbf{u}}_t $ are the optimal solutions of $\mathbf{A}_t$ and $\mathbf{u}_t$, respectively.
Hence, \eqref{eq: First-Derivative} gives the following $L$ necessary and sufficient conditions
\begin{equation} \label{eq: First-Derivative1}
\begin{split}
&\sqrt{ \rho_{i,t} } \left( \Delta_i+ \sum_{k=1}^K \lambda_k \theta_{i,k} \mathbbm{1}_k (t)  + \sum_{k=1}^{K} \lambda_k c_{i,k} - \lambda_t b_{i,t} + \sum_{i=1}^L \mu_i \right)\\
&
= 0. 
\end{split}
\end{equation}
where $c_{i,k}$ and $b_{i,t}$ are the $i$th entry of the vectors $\mathbf{c}_k$ and $\mathbf{b}_t$, respectively which are defined in Theorem \ref{Theorem: Linear-Solution}. If BS $i$ associates with user $t$ (i.e., $\rho_{i,t} \neq 0 $), then from \eqref{eq: First-Derivative} we achieve
\begin{equation} \label{eq: Optimal-Lagrangian}
 \Delta_i+ \sum_{k=1}^K \lambda_k \theta_{i,k} \mathbbm{1}_k (t) + \sum_{k=1}^{K} \lambda_k c_{i,k} - \lambda_t b_{i,t} + \sum_{i=1}^L \mu_i = 0,
\end{equation}
from which the optimal Lagrange multiplier $\check{\lambda}_{t}$ is
\begin{equation} \label{eq: Max-Lambda}
\check{\lambda}_{t} = \left( \Delta_i + \sum_{k=1}^{K} \lambda_k \theta_{i,k} \mathbbm{1}_k (t) + \sum\limits_{k=1}^{K} \check{\lambda}_{k} c_{i,k} + \sum\limits_{i=1}^{L} \check{\mu}_{i} \right)\frac{ 1 }{b_{i,t}}. 
\end{equation}
According to the duality regularization, $\check{\mathbf{A}} \succeq 0$, we stress that
\begin{equation}
\lambda_t \leq \left( \Delta_i + \sum_{k=1}^{K} \check{\lambda}_k \theta_{i,k} \mathbbm{1}_k (t) + \sum\limits_{k=1}^{K} \check{\lambda}_{k} c_{i,k} + \sum\limits_{i=1}^{L} \check{\mu}_{i} \right)\frac{ 1 }{b_{i,t}}.
\end{equation} 
The above equation implies that the system only selects BSs satisfying \eqref{eq: Max-Lambda}, otherwise transmit powers are set to zero due to \eqref{eq: First-Derivative1} and hence there is no communication between these BSs and user $t$. Moreover, the QoS constraints ensure that user $t$ must be served by at least one BS. The BS association of user $t$ is hence defined as shown in the theorem. 
\vspace*{-0.5cm}
\subsection{Proof of Corollary ~\ref{Corollary-Bisection}}
 \label{Appendix Bisection}
 \vspace*{-0.2cm}
 Because pilot reuse reduces the SE of the users, we only need to estimate $\xi_0^{\mathrm{upper}}$ for the optimistic special case all the users use  mutually orthogonal pilot sequences, and then this upper bound also applies for the scenario where the system suffers from pilot contamination effects. From Theorem \ref{Theorem-Bisection}, $\xi_0^{\mathrm{upper}}$ can be computed as
\begin{equation} \label{eq: UpperSINR-Definition}
\xi_0^{ \mathrm{upper} } \triangleq \underset{k}{ \min } \frac{ R_k }{ w_k } = \underset{k}{ \min } \; \gamma^{\mathrm{DL}} \left( 1 - \frac{\tau_p}{\tau_c} \right) \log_2 \left( 1 + \mathrm{SINR}_k \right) .
\end{equation}
To solve \eqref{eq: UpperSINR-Definition}, we first compute the maximal SINR value. In the case of MRT precoding, from  \eqref{eq: SINR-MRTOrthogonal} we observe
\begin{equation} \label{eq: SINR-MRT-Proof}
\begin{split}
\mathrm{SINR}_k^{ \mathrm{MRT}} &= \frac{ M \sum\limits_{i =1}^{L} \frac{\rho_{i,k} p_k \tau_p \beta_{i,k}^2 }{p_k \tau_p \beta_{i,k} + \sigma_{\textrm{UL}}^2 } }{\sum\limits_{i = 1 }^{L} \sum\limits_{ t=1 }^{K} \rho_{i,t} \beta_{i,k} + \sigma_{ \mathrm{DL}}^2 } \\
& \stackrel{(a)}{ \leq }\frac{ M \sum\limits_{i =1}^{L} \rho_{i,k} \beta_{i,k } }{ \sum\limits_{i = 1 }^{L} \sum\limits_{ t=1 }^{K} \rho_{i,t} \beta_{i,k} + \sigma_{ \mathrm{DL}}^2 } \stackrel{(b)}{ \leq } M.
 \end{split}
\end{equation}
In \eqref{eq: SINR-MRT-Proof}, $(a)$ is because $ \frac{p_k \tau_p \beta_{i,k} }{ p_k \tau_p \beta_{i,k} + \sigma_{\mathrm{UL}}^2 } \leq 1 $ and $(b)$ is obtained since $\sum_{i =1}^{L} \rho_{i,k} \beta_{i,k } \leq (  \sum_{i = 1 }^{L} \sum_{ t=1 }^{K} \rho_{i,t} \beta_{i,k} + \sigma_{ \mathrm{DL}}^2  )$. Combining \eqref{eq: UpperSINR-Definition} and \eqref{eq: SINR-MRT-Proof}, $\xi_0^{\mathrm{upper}}$ is selected as in the corollary.

In the case of ZF precoding, we first obtain
\begin{equation} \label{eq: Numerator-ZF1}
\begin{split}
\sum\limits_{i =1}^{L} \left( \frac{ \rho_{i,k} \beta_{i,k}}{ p_k \tau_p \beta_{i,k} + \sigma_{\mathrm{UL}}^2 } \right)\beta_{i,k} \leq  \sum\limits_{i =1}^{L} \frac{ \rho_{i,k} \beta_{i,k}}{ p_k \tau_p \beta_{i,k} + \sigma_{\mathrm{UL}}^2 }  \sum\limits_{i =1}^{L} \beta_{i,k}
\end{split}
\end{equation}
by utilizing the Cauchy-Schwarz's inequality and the facts that $\sum_{i =1}^{L}  ( \frac{ \rho_{i,k} \beta_{i,k}}{ p_k \tau_p \beta_{i,k} + \sigma_{\mathrm{UL}}^2 } )^2 \leq ( \sum_{i =1}^{L} \frac{ \rho_{i,k} \beta_{i,k}}{ p_k \tau_p \beta_{i,k} + \sigma_{\mathrm{UL}}^2 } )^2 $ along with $\sum_{i =1}^{L} \beta_{i,k}^2  \leq  ( \sum_{i =1}^{L} \beta_{i,k} )^2$. Consequently, the SINR value can be upper bounded as
\begin{equation} \label{eq: SINR-ZF-Proof}
\begin{split}
\mathrm{SINR}_k^{\mathrm{ZF}} &= \frac{ (M -K) \sum\limits_{i =1}^{L} \frac{\rho_{i,k} p_k \tau_p \beta_{i,k}^2 }{p_k \tau_p \beta_{i,k} + \sigma_{\textrm{UL}}^2 } }{\sum\limits_{i = 1 }^{L} \sum\limits_{ t=1 }^{K}  \frac{ \rho_{i,t} \beta_{i,k} \sigma_{\textrm{UL}}^2}{p_k \tau_p \beta_{i,k} + \sigma_{\textrm{UL}}^2} + \sigma_{ \mathrm{DL}}^2 } \\
& \leq \frac{ (M-K) \frac{p_k \tau_p}{\sigma_{\mathrm{UL}}^2}   \sum\limits_{i =1}^{L} \beta_{i,k} \sum\limits_{i =1}^{L} \frac{ \rho_{i,k} \beta_{i,k} \sigma_{\mathrm{UL}}^2 }{ p_k \tau_p \beta_{i,k} + \sigma_{\mathrm{UL}}^2 }  }{ \sum\limits_{ i =1 }^{L} \sum\limits_{ t=1 }^{K} \rho_{i,t} \frac{\beta_{i,k} \sigma_{ \mathrm{UL}}^2 }{p_k \tau_p \beta_{i,k} + \sigma_{ \mathrm{UL}}^2 } + \sigma_{ \mathrm{DL}}^2 } \\
&\leq  \frac{(M-K) p_k \tau_p }{ \sigma_{\mathrm{UL}}^2 }  \sum_{i =1}^{L} \beta_{i,k}. 
\end{split}
\end{equation}
In summary, combining \eqref{eq: UpperSINR-Definition} and \eqref{eq: SINR-ZF-Proof}, $\xi_0^{\mathrm{upper}}$ can be selected as stated in the corollary.

\subsection{Joint Power Allocation and User Association for Massive MIMO Systems with Coherent Joint Transmission}

 \label{Appendix: Coherent Joint Transmission}
 With coherent joint transmission all BSs in the network will precode and send the same signal to a user. It means that the received signal at user $k$ is
 \begin{equation}
y_k = \sum_{i=1}^{L} \sqrt{\rho_{i,k}} \mathbf{h}_{i,k}^H \mathbf{w}_{i,k} s_{k} + \sum_{i=1}^{L} \sum_{ \substack{ t=1 \\ t \neq k}}^{K} \sqrt{\rho_{i,t}} \mathbf{h}_{i,k}^H \mathbf{w}_{i,t} s_t + n_k. 
 \end{equation}
 Applying the added-and-subtract technique that is shown in \eqref{eq: Signal-s_1k} and \eqref{eq: Signal-s_lk} and then considering Gaussian noise as the worst case distribution of the uncorrelated noise \cite{Hassibi2003a}, a lower bound on the ergodic SE of user $k$ is obtained as
 \begin{equation}
 R_k = \gamma^{\mathrm{DL}} \left( 1 - \frac{\tau_p}{\tau_c}  \right) \log_2 (1 + \mathrm{SINR}_k ) \quad \textrm{[bit/symbol]},
 \end{equation}
where the SINR value, $\textrm{SINR}_{k}$, is presented as
   \begin{equation} \label{eq: Rate_Coherent}
 \frac{  \left| \sum\limits_{i=1}^L \sqrt{\rho_{i,k}} \mathbb{E} \{ \mathbf{h}_{i,k}^H \mathbf{w}_{i,k} \} \right|^2  }{ \sum\limits_{t=1}^K \mathbb{E} \left\{ \left| \sum\limits_{i=1}^L \sqrt{\rho_{i,t}} \mathbf{h}_{i,k}^H \mathbf{w}_{i,t} \right|^2 \right\}  - \left| \sum\limits_{i=1}^L \sqrt{\rho_{i,k}} \mathbb{E} \{ \mathbf{h}_{i,k}^H \mathbf{w}_{i,k} \} \right|^2   + \sigma_{\textrm{DL}}^2 }.
   \end{equation}
Utilizing the same techniques as in Appendix \ref{Appendix Corollary-MRT-Rate} and \ref{Appendix Corollary-ZF-Rate}, the total transmit power minimization problem is expressed for Rayleigh fading channels together with MRT or ZF precoding
\begin{equation} \label{eq: Optimization_Coherent} 
 \begin{aligned}
 & \underset{ \{ \rho_{i,t} \geq 0 \}}{\mathrm{minimize}} & &
 \sum_{i=1}^{L} \Delta_i \sum_{t=1}^{K} \rho_{i,t}  \\
 & \mbox{subject to} && \frac{\left( \sum\limits_{i=1}^L \sqrt{ \rho_{i,k} g_{i,k} } \right)^2 }{\sum\limits_{i=1}^L \sum\limits_{t \in \mathcal{P}_{k} \setminus \{k\} } \rho_{i,t} g_{i,k}+ \sum\limits_{ i=1 }^L   \sum\limits_{t=1}^K \rho_{i,t} z_{i,k} + \sigma_{\textrm{DL} }^2 }  \\
 &&& \geq \hat{ \xi }_k, \; \forall k \\
 &&& \sum_{t=1}^{K} \rho_{i,t} \leq P_{\mathrm{max},i}, \; \forall i.  \\
 \end{aligned}
 \end{equation}
 Here, the parameters $g_{i,k}$ and $z_{i,k}$ are specified by the precoding scheme. MRT precoding gives $g_{i,k} =  M \theta_{i,k}  $ and $z_{i,k} = \beta_{i,k}$ while ZF precoding obtains $g_{i,k} =  (M-K) \theta_{i,k}  $ and $z_{i,k} = \beta_{i,k} - \theta_{i,k}$. Let $\mathbf{U} = [\mathbf{u}_1, \ldots , \mathbf{u}_K ] \in \mathbb{C}^{M \times K}$ have columns $\mathbf{u}_{t} = [\sqrt{\rho_{1,t}}, \ldots, \sqrt{\rho_{L,t}}]^T, \mbox{ for } t= 1, \ldots, K$.  Therefore, we can denote $\mathbf{u}_{i}'$ to as the $i$th row of $\mathbf{U}$. Furthermore, we also let $\mathbf{z}_t = [\sqrt{z_{1,t}}, \ldots, \sqrt{z_{L,t}}]^T $ and $\mathbf{g}_t = [\sqrt{g_{1,t}}, \ldots, \sqrt{g_{L,t}}]^T $. Finally, \eqref{eq: Optimization_Coherent} is reformulated as 
\begin{equation} \label{eq: SOCP} 
 \begin{aligned}
 & \underset{ \{ \mathbf{u}_{t} \succeq 0 \}}{\mathrm{minimize}}
 & & \sum_{t=1}^{K} \pmb{\Delta}^T (\mathbf{u}_{t} \circ \mathbf{u}_{t}) \\
 & \mbox{subject to}
 & & || \mathbf{s}_k || \leq \mathbf{g}_k^T \mathbf{u}_k, \; \forall k\\
 & && || \mathbf{u}_{i}'|| \leq \sqrt{P_{\mathrm{max},i}}, \; \forall i. \\
 \end{aligned}
 \end{equation}
Here $\circ$ denotes the element-wise product of two vectors and the vector $\mathbf{s}_k \in \mathbb{C}^{K+ |\mathcal{P}_k|}$ is 
\begin{equation*}
 \left[\sqrt{\hat{\xi}_k} \left( \mathbf{g}_k^T \mathbf{u}_{t_{1}^{'}}, \ldots, \mathbf{u}_{t_{|\mathcal{P}_{k} \setminus \{k\} |}^{'}},  \mathbf{z}_k^T\mathbf{u}_1, \ldots, \mathbf{z}_k^T\mathbf{u}_K \right),\sigma_{\textrm{DL}} \right]^T ,
 \end{equation*}
where $|\cdot|$ denotes the cardinality of a set and $t_{1}^{'}, \ldots, t_{|\mathcal{P}_{k} \setminus \{k\} |}^{'}$ are all the user indices that belong to $\mathcal{P}_k \setminus \{k\}$. We stress that, in \eqref{eq: SOCP}, the object function is convex since it is a quadratic function of variables $\mathbf{u}_t, \forall t$. Additionally, the constraint functions are second-order cones. Consequently, \eqref{eq: SOCP} is a convex program, and therefore the optimal solutions to the power allocation and user association problems can be obtained by using interior-point toolbox CVX \cite{cvx2015}.  Besides, the max-min QoS levels are obtained by solving \eqref{Problem: Max-Min-QoS} in an iterative manner that considers \eqref{eq: SOCP} as the cost function in each iteration.

\bibliographystyle{IEEEtran}
\bibliography{IEEEabrv,refs}

\end{document}